\documentclass[11pt,a4paper]{article}

\usepackage{amsmath,amssymb,amsthm}
\usepackage{enumitem}
\usepackage{hyperref}
\usepackage{geometry}
\usepackage{algorithm}
\usepackage{algpseudocode}
\usepackage{booktabs}
\usepackage{array}
\usepackage{orcidlink}     
\usepackage{url}

\newtheorem{assumption}{Assumption}

\usepackage{draftwatermark}
\SetWatermarkText{\textsc{Draft}}
\SetWatermarkScale{3}
\SetWatermarkColor[gray]{0.9}

\hypersetup{
  colorlinks=true,
  linkcolor=black,
  citecolor=black,
  urlcolor=black
}

\newtheorem{theorem}{Theorem}[section]
\newtheorem{proposition}[theorem]{Proposition}
\newtheorem{corollary}[theorem]{Corollary}

\theoremstyle{definition}
\newtheorem{definition}[theorem]{Definition}
\newtheorem{example}[theorem]{Example}
\theoremstyle{remark}
\newtheorem{remark}[theorem]{Remark}
\newtheorem{openprob}[theorem]{Open Problem}

\newcommand{\keyspace}{\mathcal{K}}
\newcommand{\Hent}{H}
\newcommand{\floor}[1]{\lfloor #1 \rfloor}
\newcommand{\ceil}[1]{\lceil #1 \rceil}

\begin{document}

\title{From Bits to Mixed-Radix Keys:\\
       Horner Decomposition, Uniform Sampling, and\\
       the Information-Theoretic QKD Interface of the MR-OTP}

\author{Fabio F.G. Buono\thanks{Independent Researcher}\,\orcidlink{0009-0004-9199-2793}}
\date{\today}
\maketitle

\maketitle

\maketitle

\begin{center}
    {\large \textit{Preprint}}
\end{center}

\begin{abstract}
The Mixed-Radix One-Time Pad (MR-OTP)~\cite{buono2026a} generalizes the classical one-time pad to digit spaces with non-uniform bases, preserving Shannon perfect secrecy while enabling native encoding of heterogeneous
data alphabets without conversion to binary. A practical deployment of the MR-OTP requires a concrete, bias-free procedure for converting raw binary entropy, as produced by a Quantum Key Distribution (QKD) source, into a key tuple that is uniform over the
mixed-radix key space. This paper identifies Horner's method~\cite{horner1819} and its inverse as the canonical algebraic correspondence for this conversion, grounded in the structural equivalence between the mixed-radix positional
weights of~\cite{buono2026a} and the nested form of polynomial evaluation. We prove that naive modular reduction introduces a statistical bias that violates the uniformity hypothesis of the MR-OTP perfect secrecy
theorem (Proposition~\ref{prop:bias}), and that rejection sampling combined with inverse Horner decomposition restores uniform distribution at an expected bit cost of at most $2\ceil{\log_2 P}$
(Theorem~\ref{thm:uniformity}, Proposition~\ref{prop:cost}). We prove end-to-end perfect secrecy for the full QKD pipeline in one session (Theorem~\ref{thm:e2e}) and across $N$ sessions under a
partition key-rolling scheme (Theorem~\ref{thm:multisession}). We prove that the information-theoretic security guarantee is preserved under any future algorithmic advance on the Base Recovery Problem,
including a polynomial-time solution (Theorem~\ref{thm:invariance}), extend this to message distributions with arbitrary restricted support (Corollary~\ref{cor:restricted-support}), and establish an unconditional
lower bound on the query complexity of base recovery in the known-plaintext setting (Theorem~\ref{thm:query-lower-bound}). We present a complete adversary taxonomy separating testing-based,
algebraic, unbounded, and ciphertext-only adversaries, proving that base recovery is information-theoretically insoluble in the ciphertext-only setting (Corollary~\ref{cor:co-insoluble}) and establishing that the
information-theoretic guarantee on messages covers all classes unconditionally while the query lower bound and computational hardness cover disjoint subsets of the adversary space for the problem of
identifying the base sequence from known-plaintext data. We identify the logical structure of the invariance result as an instance of the Syntactic Invariance Principle of~\cite{buono2026b}.
We also quantify representational efficiency over binary OTP for natural alphabets (Proposition~\ref{prop:efficiency}), develop a batched extraction algorithm (Theorem~\ref{thm:batched}), and characterize
security degradation for non-ideal QKD sources (Theorem~\ref{thm:epsilon-security}). Five open problems identify the main directions for further work, with the product-support case of the optimal base selection problem resolved by Corollary~\ref{cor:restricted-support}.
\end{abstract}

\tableofcontents\bigskip

\paragraph{Notation.}
Throughout the paper, $\log_2$ denotes the base-$2$ logarithm, $\floor{\cdot}$ the floor function, and $\ceil{\cdot}$ the ceiling
function. The symbol $\mathbb{Z}$ denotes the integers and $\mathbb{Z}/m\mathbb{Z}$ the integers modulo $m$.
Given a base sequence $B=(b_1,\dots,b_L)$ with each $b_i\ge 2$, we write $P = \prod_{i=1}^L b_i$ for the product of all bases and
$\keyspace = \prod_{i=1}^L \{0,\dots,b_i-1\}$ for the corresponding key space. The symbol $\Hent(\cdot)$ denotes Shannon entropy measured in bits,
$\mathcal{R}$ denotes a source of independent uniform random bits, and all probability statements are taken over the randomness of $\mathcal{R}$.
\bigskip

\section{Introduction}
\label{sec:intro}

The one-time pad, introduced by Vernam in 1917 and patented in 1919~\cite{vernam1926}, achieves perfect secrecy in the sense that an
intercepted ciphertext yields no information about the plaintext regardless of the computational resources available to the adversary. Shannon formalized this property in 1949~\cite{shannon1949}, proving that
it holds if and only if the key is drawn uniformly at random, is at least as long as the message, and is never reused. The preceding paper~\cite{buono2026a} extended this classical construction
to \emph{mixed-radix} digit spaces by allowing each position of the message to reside in its own finite cyclic group, defining the MR-OTP and proving that Shannon's perfect secrecy theorem carries over without modification.
The construction originates from the earlier work~\cite{buono2012}, which introduced the idea of encoding data in a secret mixed-radix representation
and observed that the numerical value of a sequence is hidden from any process that operates on the symbols alone.

\paragraph{The MR-OTP as a unifying framework.}
The first paper~\cite{buono2026a} establishes the MR-OTP as a unifying
algebraic framework for perfect-secrecy encryption over finite digit spaces.
By allowing each position $i$ of the message to carry its own base
$b_i \ge 2$, the construction subsumes two classical families as special
cases.
When all bases are set to $b_i = 2$, modular addition at each position
reduces to bitwise XOR and the MR-OTP coincides with the binary OTP
(Corollary~1 of~\cite{buono2026a}).
When all bases are set to a common constant $b$, the construction
recovers the one-time pad over an alphabet of size $b$, for example
$b=26$ for Latin text or $b=4$ for nucleotide sequences
(Corollary~2 of~\cite{buono2026a}).
The extension to non-constant base sequences is the contribution of
\cite{buono2026a}, where each symbol of the data is encoded in its native
algebraic structure without any intermediate binary conversion.
This unification preserves the information-theoretic constraint on
key length without modification.
As shown in Proposition~2 of~\cite{buono2026a}, the pad must carry entropy
at least equal to that of the message regardless of how the bases are
chosen or whether they are kept secret, because the bases modify the
representation of the data but leave the entropy that the key
must supply unchanged.
The MR-OTP therefore achieves Shannon's bound~\cite{shannon1949} with
equality.

The present paper addresses the next algorithmic question, namely how to
convert the output of a concrete binary entropy source into a key that
satisfies the uniform distribution required by the MR-OTP.
Quantum key distribution (QKD) systems such as BB84~\cite{bb84} are the
most natural source for this purpose, as they produce a stream of
independent, uniformly distributed classical bits whose distribution is
guaranteed by the structure of the protocol, with no assumption on the
computational resources of an adversary.
The MR-OTP, however, requires a key $K = (k_1,\dots,k_L)$ in which each
digit $k_i$ is uniform over $\{0,\dots,b_i-1\}$ and the digits are
mutually independent, a condition that a binary stream satisfies only when
every base $b_i$ is a power of $2$, a requirement that the data alphabets
of practical interest do not meet.

Reading $k = \ceil{\log_2 P}$ bits and reducing the resulting integer
modulo $P$ introduces a statistical distortion known as modulo
bias~\cite{lemire2019}, which arises whenever $P$ is not a power of $2$.
In most algorithmic settings this bias is negligible, but in the
information-theoretic setting any non-uniformity in the key distribution
violates the hypothesis of the MR-OTP perfect secrecy theorem~\cite{buono2026a}
and renders the system insecure regardless of the magnitude of the
distortion.

\paragraph{Main contributions.}
\begin{enumerate}[label=\arabic*.]
  \item We identify Horner's method~\cite{horner1819} and its inverse
        as the canonical algebraic duality underlying the conversion from
        integers to mixed-radix tuples
        (Section~\ref{sec:horner}).
  \item We prove that naive modular reduction of a $k$-bit integer
        introduces a bias that breaks perfect secrecy
        (Proposition~\ref{prop:bias}), and that a rejection-sampling
        step restores uniform distribution at an expected bit cost of at
        most $2\ceil{\log_2 P}$ (Theorem~\ref{thm:uniformity},
        Proposition~\ref{prop:cost}).
  \item We prove end-to-end information-theoretic security of the full
        QKD pipeline for a single session (Theorem~\ref{thm:e2e}) and
        for $N$ sessions under a partition key-rolling scheme
        (Theorem~\ref{thm:multisession}).
  \item We prove that the information-theoretic security guarantee is
        preserved regardless of any computational advance on the Base
        Recovery Problem (Theorem~\ref{thm:invariance}), and extend
        this to message distributions with arbitrary restricted support
        (Corollary~\ref{cor:restricted-support}).
  \item We prove that the Base Recovery Problem is information-theoretically
        insoluble in the ciphertext-only setting
        (Corollary~\ref{cor:co-insoluble}), establish an unconditional
        lower bound on its query complexity in the known-plaintext setting
        (Theorem~\ref{thm:query-lower-bound}), separate the testing and
        algebraic adversary classes, and present the complete adversary
        taxonomy (Remark~\ref{rem:taxonomy}).
  \item We identify the logical structure of the invariance result
        as an instance of the Syntactic Invariance Principle
        of~\cite{buono2026b}, and use this framework to reformulate
        the dynamic key rolling open problem
        (Section~\ref{rem:sip}, Open Problem~\ref{op:dynamic}).
  \item We quantify representational efficiency over binary OTP
        (Section~\ref{sec:efficiency}), develop the batched key
        extraction algorithm (Section~\ref{sec:batched}), and
        characterize security degradation for non-ideal QKD sources
        (Theorem~\ref{thm:epsilon-security}).
\end{enumerate}

\paragraph{Organization.}
Section~\ref{sec:horner} reviews Horner's method, develops its inverse,
and establishes the forward--inverse duality.
Section~\ref{sec:sampling} proves the bias theorem, the uniformity theorem,
the cost bound, and states the complete algorithm.
Section~\ref{sec:qkd} proves end-to-end information-theoretic security
of the full QKD pipeline.
Section~\ref{sec:efficiency} quantifies representational efficiency over
binary OTP.
Section~\ref{sec:rolling} proves multi-session security and formulates
the dynamic key rolling open problem using the Syntactic Invariance
Principle framework.
Section~\ref{sec:open} states the five open problems.
Section~\ref{sec:hardness} analyzes the Base Recovery Problem in both
the ciphertext-only and known-plaintext settings, proves the
information-theoretic insolubility of the former
(Corollary~\ref{cor:co-insoluble}), establishes unconditional bounds in
the query model for the latter, and identifies the logical structure of
the temporal security result as an instance of the Syntactic Invariance
Principle of~\cite{buono2026b}.
Section~\ref{sec:batched} develops the batched key extraction algorithm.
Section~\ref{sec:epsilon} quantifies security degradation for non-ideal
QKD sources.

\section{Horner's Method and Its Inverse}
\label{sec:horner}

\subsection{Horner's method: polynomial evaluation in nested form}

Let $a_0, a_1, \dots, a_n$ be the coefficients of a polynomial and let
$x$ be an indeterminate.
Evaluating the polynomial in expanded form as
\[
  P(x) \;=\; a_n x^n + a_{n-1} x^{n-1} + \cdots + a_1 x + a_0
\]
requires $O(n^2)$ multiplications when each power $x^j$ is computed
independently by repeated squaring.
Horner's method~\cite{horner1819} eliminates this redundancy by rewriting
the same expression in nested form as
\[
  P(x) \;=\; a_0 + x\bigl(a_1 + x\bigl(a_2 + \cdots + x(a_{n-1} +
             x\,a_n)\cdots\bigr)\bigr),
\]
which requires only $n$ multiplications and $n$ additions and is optimal
by a classical lower-bound argument~\cite{knuth1997}.
The nested form is computed by the recurrence
\[
  b_n = a_n, \qquad b_j = a_j + x\,b_{j+1} \quad (j = n-1,\dots,0),
\]
with $P(x) = b_0$.

\begin{remark}
  Horner's method is valid over any ring, and in particular over
  $\mathbb{Z}$ and over $\mathbb{Z}/m\mathbb{Z}$ for any modulus $m$.
  It is this algebraic generality that makes the method applicable to
  mixed-radix numeral systems over finite cyclic groups.
\end{remark}

\subsection{The mixed-radix representation as a Horner evaluation}

Fix a length $L$ and a base sequence $B = (b_1,\dots,b_L)$ with each
$b_i \ge 2$.
Recall from~\cite{buono2026a} the positional weights
\[
  W_i \;=\; \prod_{j > i} b_j, \qquad W_L = 1.
\]
The integer value associated with a digit tuple $(k_1,\dots,k_L)$ is
\begin{equation}
  \label{eq:horner-forward}
  V \;=\; \sum_{i=1}^{L} k_i W_i
    \;=\; k_1(b_2 b_3 \cdots b_L) + k_2(b_3 b_4 \cdots b_L) + \cdots + k_L.
\end{equation}
Writing the same sum in nested form gives
\begin{equation}
  \label{eq:horner-nested}
  V \;=\; \bigl(\cdots\bigl((k_1 \cdot b_2 + k_2) \cdot b_3 + k_3\bigr)
          \cdots \bigr) \cdot b_L + k_L,
\end{equation}
which is a Horner evaluation of the digit sequence $(k_1,\dots,k_L)$
treated as coefficients, with the position-dependent multipliers
$b_2, b_3, \dots, b_L$ playing the role of the variable $x$ at each
nesting level.

\subsection{Inverse Horner decomposition}

\begin{definition}[Inverse Horner decomposition]
\label{def:inverse-horner}
Given a base sequence $B = (b_1,\dots,b_L)$ and an integer
$V \in \{0,\dots,\prod_i b_i - 1\}$, the \emph{inverse Horner
decomposition} recovers the digit tuple $(k_1,\dots,k_L)$ by the
following procedure.
\begin{enumerate}[label=\arabic*.]
  \item Set $V^{(L)} \leftarrow V$.
  \item For $i = L$ downto $1$, set
        $k_i \leftarrow V^{(i)} \bmod b_i$ and
        $V^{(i-1)} \leftarrow \floor{V^{(i)}/b_i}$.
  \item Return $(k_1,\dots,k_L)$.
\end{enumerate}
\end{definition}

\begin{proposition}[Correctness]
\label{prop:correctness}
For every $V \in \{0,\dots,\prod_i b_i - 1\}$, the inverse Horner
decomposition returns the unique tuple $(k_1,\dots,k_L)$ with
$0 \le k_i < b_i$ and $V = \sum_{i=1}^L k_i W_i$.
\end{proposition}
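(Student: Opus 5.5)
The proof has two parts, existence (the procedure's output satisfies the representation) and uniqueness, and I would handle both through a single loop invariant. Write $P=\prod_{i=1}^L b_i$ and, for $0\le i\le L$, the prefix products $P_i=\prod_{j\le i} b_j$, with $P_0=1$. Note also that $W_{i-1}=b_iW_i$.

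For existence I would prove, by downward induction on $i$ from $L$ to $0$, the invariant
\[
  V \;=\; V^{(i)}\,W_i + \sum_{j=i+1}^{L} k_j W_j,
  \qquad 0\le V^{(i)} < P_i,
\]
where $W_0$ is read as $P$ and an empty sum is $0$.
\begin{itemize}
  \item \emph{Base case $i=L$.} The invariant holds since $W_L=1$, $P_L=P$ and $V<P$.
  \item \emph{Inductive step.} Euclidean division gives $V^{(i)}=b_i\lfloor V^{(i)}/b_i\rfloor+(V^{(i)}\bmod b_i)=b_iV^{(i-1)}+k_i$ with $0\le k_i<b_i$. Multiplying by $W_i$ and using $b_iW_i=W_{i-1}$ yields the invariant at $i-1$.
  \item \emph{Bound.} From $V^{(i)}<P_i=b_iP_{i-1}$ we get $V^{(i-1)}=\lfloor V^{(i)}/b_i\rfloor<P_{i-1}$.
  \item \emph{Conclusion.} At $i=0$ we have $0\le V^{(0)}<P_0=1$, so $V^{(0)}=0$ and $V=\sum_{j=1}^L k_jW_j$ with every digit in range.
\end{itemize}

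For uniqueness I would count. The map $\phi:\keyspace\to\mathbb{Z}$, $(k_1,\dots,k_L)\mapsto\sum_i k_iW_i$, takes values in $\{0,\dots,P-1\}$. This follows from the telescoping bound $\sum_i (b_i-1)W_i=\sum_i (W_{i-1}-W_i)=W_0-W_L=P-1$. The existence part shows that $\phi$ is surjective onto $\{0,\dots,P-1\}$. Since $|\keyspace|=P$, a surjection between finite sets of equal size is a bijection, so the representation is unique.

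As an alternative, direct argument, suppose two tuples represent the same $V$ and let $i$ be the first position where they differ. Reduce modulo $W_{i-1}$ and divide by $W_i$; the tail sum $\sum_{j>i}$ is strictly less than $W_i$ by the same telescoping bound, which forces $k_i=k_i'$, a contradiction.

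No step here is a real obstacle. The only care needed is bookkeeping: the invariant and its bound must be stated with the right index conventions, namely $W_0=P$ and the prefix products $P_i$, so that the final step yields $V^{(0)}=0$ and no leftover high-order part. I would state these conventions explicitly before starting the induction.
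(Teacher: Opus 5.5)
Your proof is correct and takes essentially the same route as the paper. Existence comes from the division step $V^{(i)} = b_iV^{(i-1)} + k_i$ unrolled to $V=\sum_i k_iW_i$, and uniqueness from a cardinality argument on sets of size $P$, which you phrase as surjectivity of the forward map and the paper phrases as injectivity of the decomposition; your invariant $V^{(i)} < P_i$ also makes explicit the step $V^{(0)}=0$, which the paper's ``unrolling'' leaves implicit.
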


\begin{proof}
At step $i$, the division algorithm gives $V^{(i)} = k_i + b_i V^{(i-1)}$,
so $k_i = V^{(i)} \bmod b_i$ and $V^{(i-1)} = \floor{V^{(i)}/b_i}$.
Unrolling recovers the nested form~\eqref{eq:horner-nested}, hence
$V = \sum_i k_i W_i$. Uniqueness follows since the map is injective
on a set of cardinality $\prod_i b_i$.
\end{proof}

\begin{example}
\label{ex:decomposition}
$B = (7,13,5)$, $P = 455$, $V = 427$:
\[
  k_3 = 427 \bmod 5 = 2,\;\; V \leftarrow 85;\quad
  k_2 = 85  \bmod 13 = 7,\;\; V \leftarrow 6;\quad
  k_1 = 6   \bmod 7 = 6.
\]
$K = (6,7,2)$. Check: $6 \cdot 65 + 7 \cdot 5 + 2 = 390+35+2=427$. \checkmark
\end{example}

\subsection{The forward--inverse duality}

\begin{center}
\begin{tabular}{lll}
\toprule
\textbf{Direction} & \textbf{Name} & \textbf{Operation} \\
\midrule
$(k_1,\dots,k_L)\to V$ & Horner (forward) &
  $V = ((\cdots(k_1 b_2 + k_2)b_3+k_3)\cdots)b_L+k_L$ \\
$V \to (k_1,\dots,k_L)$ & Inverse Horner &
  $k_i = V^{(i)}\bmod b_i$,\; $V^{(i-1)}=\floor{V^{(i)}/b_i}$ \\
\bottomrule
\end{tabular}
\end{center}

The forward map composes a digit tuple into a single integer by
accumulating products in the manner of Horner evaluation, while the
inverse map recovers the digits from the integer by successive modular
reduction starting from the least significant position.
The two operations are mutual inverses, establishing a bijection between
$\keyspace$ and the integer interval $\{0,\dots,P-1\}$.

\section{Uniform Key Sampling: The Complete Algorithm}
\label{sec:sampling}

\subsection{Why naive conversion fails: modulo bias}

A binary entropy source such as a QKD device reads $k$ bits at a time
and delivers an integer $V$ that is uniform on $\{0,\dots,2^k-1\}$.
The inverse Horner decomposition requires $V$ to be uniform on the
smaller set $\{0,\dots,P-1\}$, and these two distributions agree only
when $P$ is itself a power of $2$.

\begin{proposition}[Modulo bias breaks perfect secrecy]
\label{prop:bias}
Let $k = \ceil{\log_2 P}$, $r = 2^k \bmod P > 0$.
If $V$ is uniform on $\{0,\dots,2^k-1\}$, then:
\begin{enumerate}[label=(\roman*)]
  \item Values $v \in \{0,\dots,r-1\}$ each have probability $2/2^k$;
        values $v \in \{r,\dots,P-1\}$ each have probability $1/2^k$.
  \item The digits $(k_1,\dots,k_L)$ from inverse Horner are
        \textbf{not} uniform over $\keyspace$.
  \item Perfect secrecy of the MR-OTP is violated.
\end{enumerate}
\end{proposition}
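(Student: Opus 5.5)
The proof will first pin down the arithmetic of $k=\ceil{\log_2 P}$. Since $r=2^k \bmod P>0$, $P$ is not a power of $2$, so $2^{k-1}<P<2^k$. Hence $1<2^k/P<2$, and $2^k = P + r$ with $0<r<P$. Here $r=2^k-P$, because the quotient $\floor{2^k/P}$ equals $1$.

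For (i), the naive procedure outputs $V \bmod P$ for $V$ uniform on $\{0,\dots,2^k-1\}$. I will count preimages. A value $v\in\{0,\dots,P-1\}$ is hit by $V=v$, and also by $V=v+P$ exactly when $v+P\le 2^k-1$, i.e.\ when $v\le r-1$. No third preimage $v+2P$ exists, since $2P>2^k$. So $v<r$ has probability $2/2^k$, and $r\le v\le P-1$ has probability $1/2^k$. These probabilities sum to $(2r+(P-r))/2^k=(P+r)/2^k=1$, which serves as a check.

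For (ii), I will use the bijection of Proposition~\ref{prop:correctness} between $\{0,\dots,P-1\}$ and $\keyspace$. Under this bijection the law of the tuple $(k_1,\dots,k_L)$ is the push-forward of the law of $V\bmod P$. A uniform law on $\keyspace$ would require every tuple to have probability $1/P$. But the tuple decomposed from $v=0$ has probability $2/2^k$, and the tuple from $v=P-1$ has probability $1/2^k$. Both values exist because $0<r\le P-1$, so two tuples have different probabilities and the law is not uniform.

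For (iii), which I expect to be the only step needing care, I will construct an explicit violation instead of merely appealing to the ``only if'' direction of Shannon's theorem. The natural choice is a single message pair and a single ciphertext. The MR-OTP computes $c_i=(m_i+k_i)\bmod b_i$ componentwise. Transported through the bijection, encryption becomes $C = M + K \bmod P$ in the sense of mixed-radix addition with carries discarded componentwise. This is a group operation on $\keyspace\cong\prod \mathbb{Z}/b_i\mathbb{Z}$, so for a fixed message it is a bijection $K\mapsto C$. Hence $\Pr[C=c\mid M=m]=\Pr[K=c-m]$, with subtraction in the group $\prod_i \mathbb{Z}/b_i\mathbb{Z}$. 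Take $c=0$, $m=0$, and $m'$ equal to the tuple of $v=P-1$, namely $(b_1-1,\dots,b_L-1)$. Then $\Pr[C=0\mid M=0]=\Pr[K=0]=2/2^k$. Its negation $-m'=(1,\dots,1)$ corresponds to the integer $\sum_i W_i$, which need not exceed $r$, so I will not evaluate $\Pr[C=0\mid M=m']$ by that route. Instead I will argue by pigeonhole. Since $K$ is non-uniform, there exist keys $\kappa,\kappa'$ with $\Pr[K=\kappa]\ne\Pr[K=\kappa']$. Put $m=0$, $m'=\kappa-\kappa'$, and $c=\kappa$. Then $\Pr[C=c\mid M=m]=\Pr[K=\kappa]$ and $\Pr[C=c\mid M=m']=\Pr[K=\kappa']$, and these differ.

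To conclude, take any message prior with full support on $\{m,m'\}$, for example uniform on that pair. Bayes' rule then gives $\Pr[M=m\mid C=c]\ne\Pr[M=m]$. This contradicts Shannon's definition of perfect secrecy, which is the one used in~\cite{buono2026a}, so perfect secrecy is violated.
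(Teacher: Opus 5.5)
Your proof is correct, and for (i) and (ii) it follows the paper. The paper counts $\floor{2^k/P}+1$ versus $\floor{2^k/P}$ preimages and transports the non-uniformity through the bijection of Proposition~\ref{prop:correctness}. You additionally show $\floor{2^k/P}=1$ and $r=2^k-P$, which is what the stated values $2/2^k$ and $1/2^k$ actually require.

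The real difference is in (iii). The paper stops at the observation that the key distribution violates the uniformity hypothesis of Theorem~1 of~\cite{buono2026a}. That establishes (iii) only through the necessity direction of Shannon's characterization, which applies because $|\keyspace|=|\mathcal{M}|=|\mathcal{C}|=P$ but is left implicit there; failure of a sufficient condition alone would not suffice. Your argument bypasses that converse. From $\Pr[C=c\mid M=m]=\Pr[K=c-m]$ in $\prod_i\mathbb{Z}/b_i\mathbb{Z}$, you build an explicit pair $m\neq m'$, a ciphertext $c$ with $\Pr[C=c]>0$ (every key has probability at least $1/2^k$), and a prior under which $\Pr[M=m\mid C=c]\neq\Pr[M=m]$. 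The result is self-contained and gives a concrete witness, at the cost of a few extra lines.

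Two small points. First, drop the phrase ``$C=M+K \bmod P$'': integer addition modulo $P$ with carries is not the MR-OTP operation, and what you actually use, correctly, is componentwise subtraction. Second, your abandoned explicit route does work if you take $m'=(1,\dots,1)$ instead of $(b_1-1,\dots,b_L-1)$. Then $c-m'=(b_1-1,\dots,b_L-1)$ is the tuple of $v=P-1$, so $\Pr[C=0\mid M=0]=2/2^k\neq 1/2^k=\Pr[C=0\mid M=m']$ directly.
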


\begin{proof}
The map $v \mapsto v \bmod P$ sends $\{0,\dots,2^k-1\}$ to
$\{0,\dots,P-1\}$, with each residue $v'<r$ covered by
$\floor{2^k/P}+1$ pre-images and each $v'\ge r$ by $\floor{2^k/P}$.
Since $r>0$ these counts differ, so the distribution on $\{0,\dots,P-1\}$
is non-uniform. The inverse Horner map is a bijection
(Proposition~\ref{prop:correctness}), so the key distribution over
$\keyspace$ is non-uniform, violating the uniformity hypothesis of
Theorem~1 in~\cite{buono2026a}.
\end{proof}

\begin{example}
\label{ex:bias}
$B=(7,13,5)$, $P=455$, $k=9$, $2^9=512$, $r=57$.
The 57 keys corresponding to $V\in\{0,\dots,56\}$ each have probability
$2/512$; the remaining $455-57=398$ keys have probability $1/512$.
\end{example}

\subsection{Rejection sampling restores uniform distribution}

\begin{definition}[Rejection sampler $\mathcal{S}(B)$]
\label{def:rejection}
Given a base sequence $B$, let $P=\prod_i b_i$ and $k=\ceil{\log_2 P}$.
The rejection sampler $\mathcal{S}(B)$ proceeds through the following steps.
\begin{enumerate}[label=\arabic*.]
  \item Draw $k$ uniform bits from $\mathcal{R}$ and interpret them as
        an integer $V \in \{0,\dots,2^k-1\}$.
  \item If $V \ge P$, discard $V$ and return to step~1.
  \item Apply the inverse Horner decomposition to $V$ and return the
        resulting key $K$.
\end{enumerate}
\end{definition}

\begin{theorem}[Uniform sampling]
\label{thm:uniformity}
$\mathcal{S}(B)$ terminates with probability~$1$ and outputs $K$
uniform on $\keyspace$ with mutually independent digits.
\end{theorem}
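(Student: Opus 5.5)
The plan is to treat $\mathcal{S}(B)$ as a sequence of independent trials and argue in three steps: termination, uniformity of the accepted integer, and transfer of uniformity to $\keyspace$ with independent digits.

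\textbf{Termination.} Since $\mathcal{R}$ supplies independent uniform bits, the successive $k$-bit blocks $V_1, V_2, \dots$ drawn in step~1 are i.i.d.\ uniform on $\{0,\dots,2^k-1\}$. Each trial is accepted with probability $p = P/2^k$. Because $k=\ceil{\log_2 P}$ we have $2^{k-1} < P \le 2^k$, so $p > 1/2$. The probability that the first $n$ trials are all rejected is $(1-p)^n < 2^{-n}$, which tends to $0$. Hence the sampler terminates with probability~$1$, and the number of trials $T$ is geometric with parameter $p$.

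\textbf{Uniformity of the accepted integer.} Let $T$ be the index of the first accepted trial. For any $v \in \{0,\dots,P-1\}$, partition the event $\{V_T = v\}$ according to the value of $T$:
\[
  \Pr[V_T = v] \;=\; \sum_{n\ge 1} (1-p)^{n-1}\,\frac{1}{2^k} \;=\; \frac{1}{2^k}\cdot\frac{1}{p} \;=\; \frac{1}{P}.
\]
Here we use that $\{V_n = v\}$ is independent of the earlier rejections and that $v < P$ forces acceptance at trial $n$. Equivalently, $V_T$ has the conditional law of $V_1$ given $V_1 < P$, which is uniform.

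\textbf{Transfer to $\keyspace$.} By Proposition~\ref{prop:correctness}, the inverse Horner map is a bijection $\{0,\dots,P-1\}\to\keyspace$. The pushforward of the uniform distribution under a bijection is uniform, so $\Pr[K=(k_1,\dots,k_L)] = 1/P = \prod_i (1/b_i)$ for every tuple. Summing over the other coordinates gives each marginal $\Pr[K_i = k_i] = 1/b_i$, since $\keyspace$ is a product set. The joint law therefore equals the product of the marginals, which gives mutual independence.

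The main obstacle is the middle step: making the stopping-time argument rigorous, which requires independence of the successive blocks from $\mathcal{R}$. If $P = 2^k$, then $p=1$ and no rejection ever occurs, so that case is immediate.
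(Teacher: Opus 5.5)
Your proof is correct and follows the paper's three-step structure: geometric termination from acceptance probability $P/2^k \ge 1/2$, uniformity of the accepted integer, and transfer through the inverse Horner bijection, with the joint law $1/P = \prod_i (1/b_i)$ factoring into uniform marginals. Your explicit sum over the stopping time $T$ is a more careful version of the paper's one-line conditioning $\Pr[V=v \mid V<P] = 1/P$, and your explicit marginal computation fills in a step the paper leaves implicit, but the argument is the same.
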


\begin{proof}
\textit{Termination.} The acceptance probability per round is
$P/2^k \ge 1/2$ (since $2^k \le 2P$). Rounds are independent, so
termination is geometric with parameter $\ge 1/2$.

\textit{Uniformity upon acceptance.} For any $v \in \{0,\dots,P-1\}$:
\[
  \Pr[V=v \mid V < P]
  = \frac{1/2^k}{P/2^k} = \frac{1}{P}.
\]

\textit{Independence and uniformity of digits.} The inverse Horner map
is a bijection $\{0,\dots,P-1\} \to \keyspace$
(Proposition~\ref{prop:correctness}), so it carries the unique uniform
distribution on $\{0,\dots,P-1\}$ to the unique uniform distribution
on $\keyspace = \prod_i \{0,\dots,b_i-1\}$.

For independence, observe that $|\keyspace| = P = \prod_i b_i$.
For any fixed values $t_1 \in \{0,\dots,b_1-1\},\dots,
t_L \in \{0,\dots,b_L-1\}$, just one $v \in \{0,\dots,P-1\}$
maps to $(t_1,\dots,t_L)$, so
$\Pr[K = (t_1,\dots,t_L)] = 1/P = \prod_i (1/b_i)$.
This factorization of the joint distribution into a product of marginals
is mutual independence, with each $k_i$ uniform on $\{0,\dots,b_i-1\}$.
\end{proof}

\subsection{Expected bit cost}

\begin{proposition}[Bit cost bound]
\label{prop:cost}
The expected number of bits consumed by $\mathcal{S}(B)$ satisfies
\[
  \mathbb{E}[\text{bits}] = k \cdot \frac{2^k}{P} \;\le\; 2k
  = 2\ceil{\log_2 P}.
\]
\end{proposition}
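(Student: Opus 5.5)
The plan is to treat the number of rounds $R$ executed by $\mathcal{S}(B)$ as a geometric random variable and to use the fact that each round consumes exactly $k=\ceil{\log_2 P}$ bits. Then $\text{bits}=kR$, and the claim follows from $\mathbb{E}[R]$ together with the bound $2^k\le 2P$ already used in the termination part of Theorem~\ref{thm:uniformity}.

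The first step is to set up the rounds. Each round draws $k$ fresh bits from $\mathcal{R}$ and interprets them as $V$ uniform on $\{0,\dots,2^k-1\}$. The round succeeds exactly when $V<P$, which happens with probability $p=P/2^k$. Because $\mathcal{R}$ supplies independent bits and the rounds use disjoint blocks, the success events are independent Bernoulli$(p)$ trials. Hence $R$ is geometric with parameter $p$, and $\Pr[R=j]=(1-p)^{j-1}p$ for $j\ge 1$. Termination with probability $1$ is already given by Theorem~\ref{thm:uniformity}.

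The second step is the expectation. The standard identity $\mathbb{E}[R]=\sum_{j\ge1}\Pr[R\ge j]=\sum_{j\ge1}(1-p)^{j-1}=1/p$ gives $\mathbb{E}[\text{bits}]=k\,\mathbb{E}[R]=k\cdot 2^k/P$. The sum converges because $p\ge 1/2>0$. This establishes the stated equality.

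The third step is the bound. Since $k=\ceil{\log_2 P}$ we have $k-1<\log_2 P$, so $2^{k-1}<P$ and therefore $2^k<2P$. This gives $2^k/P<2$, and so $\mathbb{E}[\text{bits}]\le 2k=2\ceil{\log_2 P}$. When $P$ is a power of $2$ the ratio equals $1$ and nothing is rejected; otherwise the inequality is strict.

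None of these steps is a real obstacle. The one point that needs care is justifying that the rounds are independent, i.e.\ that the sampler never reuses rejected bits. This follows directly from Definition~\ref{def:rejection}: step~2 discards $V$ and step~1 draws new bits from $\mathcal{R}$.
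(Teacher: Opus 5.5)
Your proposal is correct and follows the paper's proof: rounds are i.i.d.\ with success probability $p = P/2^k$, so the expected number of rounds is $1/p = 2^k/P$, each round costs $k$ bits, and $2^k \le 2P$ (which you sharpen to the strict $2^k < 2P$) gives the bound $2k$. Your justification of the geometric distribution via fresh, disjoint bit blocks and the tail-sum formula spells out steps the paper states without detail.
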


\begin{proof}
Rounds are i.i.d., each using $k$ bits, with success probability
$p = P/2^k$. Expected rounds: $1/p = 2^k/P$. Total expected bits:
$k/p \le 2k$ since $2^k \le 2P$.
\end{proof}

\begin{remark}[Practical acceptance rates for natural alphabets]
\label{rem:acceptance}
\hfill
\begin{itemize}
  \item $B = (4)^L$ (nucleotides): $P = 2^{2L}$, $k = 2L$, no rejection
        ever needed. Expected rounds: $1$.
  \item $B = (26)^5$ (five Latin letters): $P = 11\,881\,376$, $k = 24$,
        acceptance $\approx 70.8\%$, expected $\approx 1.41$ rounds.
  \item $B = (26,10,8)$: $P = 2080$, $k = 12$
        (since $2^{11}=2048 < 2080 \le 2^{12}=4096$),
        acceptance $= 2080/4096 \approx 50.8\%$, expected $\approx 1.97$ rounds.
  \item $B = (20)^4$ (four amino acids): $P = 160\,000$, $k = 18$
        (since $2^{17}=131\,072 < 160\,000 \le 2^{18}=262\,144$),
        acceptance $\approx 61.0\%$, expected $\approx 1.64$ rounds.
\end{itemize}
In all cases the expected number of rounds is at most~$2$, confirming
that the overhead of rejection sampling is negligible in practice.
\end{remark}

\begin{remark}[Comparison with Lemire's method]
\label{rem:lemire}
Algorithm~\ref{alg:sample} uses the simplest form of rejection sampling,
which performs one integer division per accepted sample during the inverse
Horner decomposition.
Lemire~\cite{lemire2019} proposes a multiplication-based variant that
avoids integer division in most cases by computing the $2k$-bit product
$V \cdot P$ and accepting when the lower $k$ bits of the product exceed
the threshold $(-P) \bmod 2^k$, falling back to a division only on
rejection events.
Since the probability of rejection is $1 - P/2^k \le 1/2$, the expected
number of divisions per sample in Lemire's method is at most $1/2$,
compared to $1$ in Algorithm~\ref{alg:sample}.

From a computational standpoint, Lemire's method reduces the division
count by roughly a factor of $2$ on general-purpose processors where
integer divisions are slower than multiplications by a substantial
margin, and this advantage is largest when $P$ is close to $2^{k-1}$
and the rejection probability is highest.
From an entropic standpoint, both methods consume the same expected
number of bits per key, namely $k \cdot 2^k / P$, so the choice of
method leaves the rate of QKD-generated entropy consumption unchanged.

For implementations that pair the MR-OTP with a dedicated QKD hardware
source operating at rates in the megabit-per-second range, the throughput
of key tuples is bounded by the bit generation rate of the source, and
both methods deliver the same throughput.
On software platforms or embedded systems where division latency is a
concern, Lemire's variant is preferable and is a drop-in replacement for
the rejection step in Algorithm~\ref{alg:sample} without affecting any
of the security properties proved in Theorem~\ref{thm:uniformity} or
Theorem~\ref{thm:e2e}.

A further comparison arises from the work of Draper and
Saad~\cite{draperSaad2025} on entropy-optimal discrete sampling, which
shows that any unbiased sampler for a distribution over $P$ outcomes
must consume at least $\log_2 P = H(\mathrm{Uniform}(P))$ bits per
sample in expectation, with achievable algorithms reaching between
$H(\mathrm{Uniform}(P))$ and $H(\mathrm{Uniform}(P))+2$ bits per sample.
The expected cost of Algorithm~\ref{alg:sample} is $k \cdot 2^k/P$,
which can be written as $\log_2 P \cdot (2^k/P) \cdot (k/\log_2 P)$.
Since $2^k/P \le 2$ and $k/\log_2 P = \ceil{\log_2 P}/\log_2 P \le
1 + 1/\log_2 P$, the algorithm is within a small constant of the
entropy-optimal bound for all $P \ge 4$, and the gap vanishes as $P$
grows.
Achieving the tight $[H(P), H(P)+2]$ bound of Draper and Saad requires
arithmetic of considerably greater complexity, and for the cryptographic
application considered here the simpler algorithm is sufficient.
\end{remark}

\subsection{The complete algorithm}

\begin{algorithm}[H]
\caption{MR-OTP Key Extraction from a Binary Entropy Source}
\label{alg:sample}
\begin{algorithmic}[1]
\Require Base sequence $B=(b_1,\dots,b_L)$, $b_i\ge 2$;
         binary entropy source $\mathcal{R}$.
\Ensure  $K=(k_1,\dots,k_L)$ uniform on $\keyspace$.
\State $P \leftarrow \prod_{i=1}^{L} b_i$;\quad
       $k \leftarrow \ceil{\log_2 P}$
\Repeat
  \State Read $k$ bits from $\mathcal{R}$; interpret as integer $V$
\Until{$V < P$}
\For{$i = L$ \textbf{downto} $1$}
  \State $k_i \leftarrow V \bmod b_i$;\quad
         $V   \leftarrow \floor{V / b_i}$
\EndFor
\State \Return $(k_1,\dots,k_L)$
\end{algorithmic}
\end{algorithm}

\begin{corollary}
\label{cor:correct}
Algorithm~\ref{alg:sample} satisfies all preconditions of
Theorem~1 of~\cite{buono2026a}: it terminates almost sure, and outputs a key
uniform on $\keyspace$ with independent digits.
\end{corollary}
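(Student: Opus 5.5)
The plan is to reduce Corollary~\ref{cor:correct} to Theorem~\ref{thm:uniformity} by showing that Algorithm~\ref{alg:sample} is a literal implementation of the rejection sampler $\mathcal{S}(B)$ of Definition~\ref{def:rejection}. Once that identification is made, the output distributions coincide, and Theorem~\ref{thm:uniformity} supplies both required properties: almost-sure termination, and a key uniform on $\keyspace$ with mutually independent digits.

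\textbf{First step: the loop.} The \textsc{repeat}--\textsc{until} loop of lines 2--4 matches steps~1--2 of $\mathcal{S}(B)$. Each iteration reads $k=\ceil{\log_2 P}$ fresh bits from $\mathcal{R}$. The loop exits exactly on the first draw with $V<P$. Because $\mathcal{R}$ supplies independent uniform bits, successive draws are i.i.d.\ uniform on $\{0,\dots,2^k-1\}$. Hence the accepted value has the law of $V$ conditioned on $V<P$.

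\textbf{Second step: the decomposition.} The \textsc{for} loop of lines 5--7 overwrites a single variable $V$. I will check by induction on $i$ descending from $L$ that, at the start of iteration $i$, this variable equals $V^{(i)}$ of Definition~\ref{def:inverse-horner}. This makes lines 5--7 exactly step~3 of $\mathcal{S}(B)$. By Proposition~\ref{prop:correctness}, the resulting digits satisfy $0\le k_i<b_i$, so the output lies in $\keyspace$.

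\textbf{Third step: conclusion.} Algorithm~\ref{alg:sample} and $\mathcal{S}(B)$ therefore induce the same output distribution. Applying Theorem~\ref{thm:uniformity} gives termination with probability~$1$, since the number of rounds is geometric with success probability $P/2^k\ge 1/2$. It also gives the product form $\Pr[K=(t_1,\dots,t_L)]=\prod_i 1/b_i$. These are precisely the key hypotheses of Theorem~1 of~\cite{buono2026a}.

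\textbf{Expected obstacle.} The only point needing care is the independence requirement in the hypothesis of~\cite{buono2026a}. It concerns the key as a whole and its independence from the message, not only the digits among themselves. The algorithm reads only $\mathcal{R}$ and the public or shared $B$, and never touches the plaintext. So $K$ is a function of $\mathcal{R}$ alone, and I will state this explicitly as the justification that $K$ is independent of the message.
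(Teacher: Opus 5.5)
Your proposal is correct and follows the paper's route. The paper gives no separate proof: it treats the corollary as an immediate consequence of Theorem~\ref{thm:uniformity}, once Algorithm~\ref{alg:sample} is recognized as an implementation of $\mathcal{S}(B)$, and the same reasoning reappears in Proposition~\ref{prop:stage2}. Your explicit check that the in-place loop variable equals $V^{(i)}$, and your remark that $K$ depends only on $\mathcal{R}$ and the fixed $B$ (so it is independent of the message provided the source is), simply make explicit what the paper leaves implicit.
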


\subsection{Worked example with rejection step}

\begin{example}
\label{ex:full}
$B=(7,13,5)$, $P=455$, $k=9$.

\textit{Round~1.} Bits \texttt{110101011} $\Rightarrow V=427 < 455$.
\textbf{Accept.} Decompose: $K=(6,7,2)$ (see Example~\ref{ex:decomposition}).

\textit{Hypothetical rejection.} Bits \texttt{111111110}
$\Rightarrow V=510 \ge 455$. \textbf{Reject}; read 9 new bits.
\end{example}

\section{End-to-End Information-Theoretic Security with QKD}
\label{sec:qkd}

\subsection{The pipeline}

We now assemble the full pipeline from QKD output to encrypted message
and prove that the composition preserves information-theoretic security
at every stage, tracing it through three consecutive transformations.

\begin{center}
\renewcommand{\arraystretch}{1.3}
\begin{tabular}{ccccc}
  QKD source & $\longrightarrow$ &
  Algorithm~\ref{alg:sample} & $\longrightarrow$ &
  MR-OTP encryption \\
  (raw bits) & & (key extraction) & & (ciphertext)
\end{tabular}
\end{center}

Each stage must preserve the uniformity and independence properties
required by the MR-OTP perfect secrecy theorem, and we analyze them
in order.

\subsection{Stage 1: QKD produces a uniform bit stream}

A QKD protocol such as BB84~\cite{bb84} allows two parties to
establish a shared string of classical bits over an insecure channel.
The established string is indistinguishable, from the perspective of any
adversary with bounded computational resources, from a string drawn
uniformly at random and with no dependence on the adversary's view.
A QKD protocol is said to be $\varepsilon$-secure when
the trace distance between the joint state of the key and the
adversary's view is at most $\varepsilon$ from the ideal distribution of
a uniform, independent key~\cite{renner2005}.
The present paper operates in the $\varepsilon = 0$ limit, which is the
standard model for information-theoretic analyses of QKD-based
systems~\cite{bb84,renner2005}, and this is stated as an assumption.

\begin{assumption}[Ideal QKD source]
\label{ass:qkd}
The QKD source produces a stream of bits that are mutually independent
and each uniformly distributed on $\{0,1\}$.
\end{assumption}

\subsection{Stage 2: Algorithm~\ref{alg:sample} preserves uniformity}

Under Assumption~\ref{ass:qkd}, Algorithm~\ref{alg:sample} reads
consecutive $k$-bit blocks from the QKD stream and processes each one
through rejection sampling followed by inverse Horner decomposition.
By Theorem~\ref{thm:uniformity}, the output $K = (k_1,\dots,k_L)$ is
uniform on $\keyspace$ with mutually independent digits, which
is the condition required for the MR-OTP to achieve perfect secrecy.

\begin{proposition}[Stage 2 preserves uniformity]
\label{prop:stage2}
Under Assumption~\ref{ass:qkd}, Algorithm~\ref{alg:sample} outputs a
key $K$ that is uniform on $\keyspace$ with mutually independent digits.
\end{proposition}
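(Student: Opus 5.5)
The plan is to reduce Proposition~\ref{prop:stage2} to Theorem~\ref{thm:uniformity}. The only real work is to check that the hypotheses under which $\mathcal{S}(B)$ was analysed there are met when the abstract source $\mathcal{R}$ is replaced by the QKD stream of Assumption~\ref{ass:qkd}. Once that is done, the conclusion follows immediately: the output $K$ is uniform on $\keyspace$ and has mutually independent digits.

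First, I would identify the two procedures. Algorithm~\ref{alg:sample} is exactly the rejection sampler $\mathcal{S}(B)$ of Definition~\ref{def:rejection}: the repeat loop is steps~1--2, and the for-loop is the inverse Horner decomposition of Definition~\ref{def:inverse-horner}. Next, I would show that under Assumption~\ref{ass:qkd} the successive $k$-bit blocks $V_1, V_2, \dots$ read in the repeat loop are i.i.d.\ uniform on $\{0,\dots,2^k-1\}$. Each block is a fixed bijective encoding of $k$ independent uniform bits, so $\Pr[V_j=v]=2^{-k}$. Disjoint blocks of independent bits are independent, so the sequence $(V_j)$ is i.i.d.

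The step needing care is that the blocks are consumed adaptively: the number of blocks read depends on earlier values. I would handle this by fixing the infinite i.i.d.\ sequence $(V_j)$ in advance and defining the accepted value as $V_\tau$, where $\tau=\min\{j : V_j<P\}$. Then for each $v<P$,
\[
\Pr[V_\tau=v]=\sum_{j\ge1}\Pr\bigl[V_1\ge P,\dots,V_{j-1}\ge P\bigr]\,2^{-k}
=2^{-k}\sum_{j\ge1}(1-P/2^k)^{j-1}=\frac{1}{P}.
\]
This calculation also shows that $\tau$ is finite almost surely, since $P/2^k\ge 1/2$. The resulting distribution is precisely the conditional law $\Pr[V=v\mid V<P]$ used in Theorem~\ref{thm:uniformity}.

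Finally, the bijection of Proposition~\ref{prop:correctness} transports the uniform law of $V_\tau$ on $\{0,\dots,P-1\}$ to $\Pr[K=(t_1,\dots,t_L)]=1/P=\prod_i 1/b_i$. This product form gives both the uniformity of each digit and their mutual independence, exactly as in the last part of the proof of Theorem~\ref{thm:uniformity}. I do not expect any genuine obstacle beyond stating the stopping-time argument cleanly.
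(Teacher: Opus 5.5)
Your proposal is correct and follows the paper's route: $k$-bit blocks that are uniform under Assumption~\ref{ass:qkd}, reduction to Theorem~\ref{thm:uniformity} for the accepted value, and transport of the uniform law through the inverse Horner bijection of Proposition~\ref{prop:correctness}. Your stopping-time computation $\Pr[V_\tau=v]=1/P$ makes rigorous the step the paper summarizes as ``the rejection step conditions on acceptance'', so it is a welcome tightening rather than a different argument.
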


\begin{proof}
The QKD source produces independent uniform bits, so each block of $k$
bits drawn by Algorithm~\ref{alg:sample} is uniform on
$\{0,\dots,2^k-1\}$.
By Theorem~\ref{thm:uniformity}, the rejection step conditions on
acceptance and delivers $V$ uniform on $\{0,\dots,P-1\}$, after which
the inverse Horner decomposition, being a bijection
(Proposition~\ref{prop:correctness}), carries this uniform distribution
to the uniform distribution on $\keyspace$.
\end{proof}

\subsection{Stage 3: MR-OTP achieves perfect secrecy}

Under Proposition~\ref{prop:stage2}, the key $K$ delivered to the
MR-OTP satisfies the uniformity and independence conditions of
Theorem~1 of~\cite{buono2026a}, so the encryption of any message $M$
under $K$ achieves perfect secrecy.

\subsection{The end-to-end theorem}

\begin{theorem}[End-to-end information-theoretic security]
\label{thm:e2e}
Under Assumption~\ref{ass:qkd}, the composition
\[
  \textup{QKD} \;\longrightarrow\;
  \textup{Algorithm~\ref{alg:sample}} \;\longrightarrow\;
  \textup{MR-OTP}
\]
achieves perfect secrecy in the sense of Shannon~\cite{shannon1949}:
for every message distribution $\Pr[M]$ and every ciphertext $C=c$
with $\Pr[C=c]>0$,
\[
  \Pr[M = m \mid C = c] \;=\; \Pr[M = m] \qquad \text{for all } m.
\]
\end{theorem}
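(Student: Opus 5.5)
The plan is to chain Proposition~\ref{prop:stage2} with a direct Shannon-style computation, rather than citing Theorem~1 of~\cite{buono2026a} as a black box. That way the one extra hypothesis the composition needs, independence of $K$ from $M$, is made explicit.

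\textbf{Step 1 (the key).} By Assumption~\ref{ass:qkd} and Proposition~\ref{prop:stage2}, the key $K$ output by Algorithm~\ref{alg:sample} is uniform on $\keyspace$, so $\Pr[K=\kappa]=1/P$ for every $\kappa\in\keyspace$. I also need $K$ to be independent of $M$. The QKD bit stream is generated without reference to the message, and $K$ is a deterministic function of that stream. The stopping time of the rejection loop depends only on the bits read, so it does not break this independence. Hence $\Pr[M=m,K=\kappa]=\Pr[M=m]/P$.

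\textbf{Step 2 (encryption is a bijection).} MR-OTP encryption is componentwise addition $c_i=(m_i+k_i)\bmod b_i$. For a fixed message $m$, the map $\kappa\mapsto c$ is a bijection of $\keyspace$; it is a translation in the group $\prod_i\mathbb{Z}/b_i\mathbb{Z}$. So for each pair $(m,c)$ there is exactly one key, namely $\kappa=c-m$ componentwise mod $b_i$, that produces $c$ from $m$. It follows that $\Pr[C=c\mid M=m]=\Pr[K=c-m]=1/P$, independent of $m$.

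\textbf{Step 3 (Bayes).} Summing over $m$ gives $\Pr[C=c]=\sum_m \Pr[M=m]\cdot 1/P=1/P>0$ for every $c\in\keyspace$. Then for every $m$,
\[
\Pr[M=m\mid C=c]=\frac{\Pr[C=c\mid M=m]\Pr[M=m]}{\Pr[C=c]}=\Pr[M=m],
\]
which is the claimed identity for every message distribution.

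The main obstacle is Step 1: justifying that the random number of rejection rounds does not correlate $K$ with anything else. I would handle it by conditioning on the round $T$ at which acceptance occurs. On the event $\{T=t\}$, the accepted block is independent of the earlier rejected blocks, and by the calculation in Theorem~\ref{thm:uniformity} it is uniform on $\{0,\dots,P-1\}$. Summing over $t$ then gives uniformity of $K$ unconditionally, and independence from $M$ follows because $M$ is independent of the whole bit stream. The remaining steps are routine.
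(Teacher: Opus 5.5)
Your proof is correct and follows the same chain as the paper: Proposition~\ref{prop:stage2} gives a uniform key, and perfect secrecy of the MR-OTP under a uniform key finishes the argument. The paper simply cites Theorem~1 of~\cite{buono2026a} for that last step, whereas you prove it directly with the translation bijection and Bayes' rule, and you also make explicit two points the paper leaves tacit: that $K$ is independent of $M$, and that the random number of rejection rounds does not bias $K$.
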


\begin{proof}
By Assumption~\ref{ass:qkd}, the QKD output is a uniform independent bit
stream. Algorithm~\ref{alg:sample} reads disjoint blocks of $k$ bits for
each key, and since the blocks are disjoint and the source is i.i.d., each
call to Algorithm~\ref{alg:sample} is independent. By
Proposition~\ref{prop:stage2}, each call produces a key $K$ uniform on
$\keyspace$ with independent digits, and each call is independent of all
other calls.
The MR-OTP then encrypts $M$ under $K$; by Theorem~1 of~\cite{buono2026a},
this achieves perfect secrecy.
\end{proof}

\subsection{Contrast with QKD paired with a symmetric cipher}

The standard deployment of QKD pairs it with a symmetric cipher such as
AES, where QKD establishes a short session key that seeds the cipher
and derives a keystream of arbitrary length.
This hybrid scheme operates with computational security at the encryption
stage, because AES provides a guarantee that holds against adversaries
with bounded computational resources, including adversaries equipped with
large-scale quantum hardware.
The pipeline of Theorem~\ref{thm:e2e} replaces AES with MR-OTP, so both
stages carry an information-theoretic guarantee and the composition
inherits this property by Theorem~\ref{thm:e2e}.
The MR-OTP requires key material at least as long as the message, as
mandated by Shannon's bound~\cite{shannon1949}, while AES expands a short
key into an arbitrary-length keystream.
This difference reflects the fundamental distinction between the two
security models, where information-theoretic security requires key
entropy that matches message entropy, and the MR-OTP achieves that
bound with equality and without excess.

\begin{remark}
The key consumption rate of the full pipeline is $\sum_{i=1}^L \log_2 b_i$
bits of QKD output per message symbol, plus the rejection overhead of at
most a factor of $2$ in expected bit cost (Proposition~\ref{prop:cost}).
When $P$ is close to a power of $2$, the overhead approaches zero and
nearly all generated bits are consumed by the algorithm.
\end{remark}

\section{Representational Efficiency over Binary OTP}
\label{sec:efficiency}

\subsection{The efficiency question}

The binary OTP converts every message into a binary string and therefore
requires $\sum_{i=1}^L \ceil{\log_2 b_i}$ bits of key for a message of
length $L$ over a base sequence $B$.
The MR-OTP operates in the digit space and requires only
$\sum_{i=1}^L \log_2 b_i$ bits of key for the same message, a saving
that arises because the binary encoding rounds each $\log_2 b_i$ up to
the nearest integer while the MR-OTP uses the fractional value,
and we formalize this saving as the binary overhead.

\begin{definition}[Binary overhead]
\label{def:overhead}
The \emph{binary overhead} of a base sequence $B = (b_1,\dots,b_L)$
with respect to binary OTP is
\[
  \Delta(B) \;=\; \sum_{i=1}^{L} \bigl(\ceil{\log_2 b_i} - \log_2 b_i\bigr)
             \;\ge\; 0.
\]
\end{definition}

\subsection{Properties of the binary overhead}

\begin{proposition}[Properties of $\Delta(B)$]
\label{prop:overhead}
\hfill
\begin{enumerate}[label=(\roman*)]
  \item $\Delta(B) = 0$ if and only if every $b_i$ is a power of~$2$.
  \item $\Delta(B) < L$ always (since $\ceil{x} - x < 1$ for all $x$).
  \item For a uniform base sequence $b_i = b$ for all~$i$:
        $\Delta(B) = L(\ceil{\log_2 b} - \log_2 b)$.
  \item The overhead per position $\ceil{\log_2 b} - \log_2 b$ is maximized
        when $b = 2^{k-1}+1$ for integer $k$, giving overhead approaching
        $1$ bit per position, and equals zero when $b$ is a power of~$2$.
\end{enumerate}
\end{proposition}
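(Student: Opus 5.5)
The plan is to work term by term, since $\Delta(B)$ is a sum of the per-position quantities $\delta(b) := \ceil{\log_2 b} - \log_2 b$. For every real $x$ we have $0 \le \ceil{x} - x < 1$, so each summand lies in $[0,1)$. Items (ii) and (iii) then follow directly: summing $L$ terms each strictly below $1$ gives $\Delta(B) < L$, and when $b_i = b$ for all $i$ the sum has $L$ identical summands $\delta(b)$.

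For (i), I would use the fact that a sum of nonnegative terms vanishes exactly when every term vanishes. So $\Delta(B)=0$ if and only if $\ceil{\log_2 b_i} = \log_2 b_i$ for every $i$, that is, if and only if each $\log_2 b_i$ is an integer. Since $b_i \ge 2$ is an integer, $\log_2 b_i = m \in \mathbb{Z}$ is equivalent to $b_i = 2^m$ being a power of $2$. The converse direction is immediate.

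For (iv), I would partition the integers $b \ge 2$ into dyadic blocks $2^{k-1} < b \le 2^k$ with $k \ge 1$. On each block $\ceil{\log_2 b} = k$ is constant, so $\delta(b) = k - \log_2 b$ is strictly decreasing in $b$.
\begin{itemize}
  \item The minimum on the block, $\delta = 0$, is attained at $b = 2^k$, which gives the power-of-$2$ clause.
  \item The maximum on the block is attained at the smallest member, $b = 2^{k-1}+1$ (for $k \ge 2$; the block with $k=1$ contains only $b=2$). This gives $\delta(2^{k-1}+1) = 1 - \log_2\bigl(1 + 2^{-(k-1)}\bigr)$.
  \item This maximum tends to $1$ as $k \to \infty$ and never reaches $1$, consistent with (ii).
\end{itemize}

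The main obstacle is interpretive rather than technical. Read literally as a global statement, ``maximized when $b = 2^{k-1}+1$'' is false: the per-block maxima increase with $k$ and have no global maximum. I would therefore state and prove (iv) as maximization within each dyadic block $(2^{k-1},2^k]$, and add the limit statement separately to justify the phrase ``approaching $1$ bit per position''.
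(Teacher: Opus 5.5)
Your proof is correct and follows the same term-by-term route as the paper, which reduces (i) and (ii) to $0 \le \ceil{x}-x<1$ with equality iff $x\in\mathbb{Z}$, and dismisses (iii) and (iv) as direct computation. Your dyadic-block argument is that computation made explicit, and your per-block reading of (iv) is the right one: the block maxima $1-\log_2\bigl(1+2^{-(k-1)}\bigr)$ increase with $k$, so the supremum $1$ is never attained and there is no global maximizer.
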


\begin{proof}
(i) $\ceil{x} = x$ iff $x \in \mathbb{Z}$, and $\log_2 b_i \in \mathbb{Z}$
iff $b_i$ is a power of~$2$. (ii) $\ceil{x} - x \in [0,1)$ for all $x$.
(iii) and (iv) follow by direct computation.
\end{proof}

\subsection{Efficiency for natural alphabets}

\begin{table}[h]
\centering
\begin{tabular}{lcccc}
\toprule
\textbf{Alphabet} & $b$ & $\log_2 b$ & $\ceil{\log_2 b}$ &
  Overhead/symbol \\
\midrule
Binary              & $2$   & $1.000$ & $1$ & $0.000$ \\
Nucleotides (DNA)   & $4$   & $2.000$ & $2$ & $0.000$ \\
Octal               & $8$   & $3.000$ & $3$ & $0.000$ \\
Decimal digits      & $10$  & $3.322$ & $4$ & $0.678$ \\
Hexadecimal         & $16$  & $4.000$ & $4$ & $0.000$ \\
Amino acids         & $20$  & $4.322$ & $5$ & $0.678$ \\
Latin alphabet      & $26$  & $4.700$ & $5$ & $0.300$ \\
ASCII printable     & $95$  & $6.570$ & $7$ & $0.430$ \\
Extended ASCII      & $256$ & $8.000$ & $8$ & $0.000$ \\
\bottomrule
\end{tabular}
\caption{Binary overhead per symbol (in bits) for common alphabets.
The overhead column is $\ceil{\log_2 b} - \log_2 b$.
MR-OTP eliminates this overhead, while binary OTP pays it at every symbol.}
\label{tab:overhead}
\end{table}

\begin{proposition}[Efficiency gain for natural alphabets]
\label{prop:efficiency}
For a message of length $L$ over alphabet of size $b$, the MR-OTP requires
$L\log_2 b$ bits of key, while the binary OTP requires $L\ceil{\log_2 b}$
bits. The relative saving is
\[
  \frac{\Delta(B)}{L\ceil{\log_2 b}}
  \;=\; 1 - \frac{\log_2 b}{\ceil{\log_2 b}}.
\]
For the Latin alphabet ($b=26$) the saving is $\approx 6.0\%$, for decimal digits ($b=10$) it is $\approx 16.9\%$, and for amino acids ($b=20$) it is $\approx 13.6\%$.
\end{proposition}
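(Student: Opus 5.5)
The plan is to specialize Definition~\ref{def:overhead} to the constant base sequence $B=(b)^L$ and then evaluate numerically. There are three steps.

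\textbf{Key lengths.} By Proposition~\ref{prop:overhead}(iii), the overhead is $\Delta(B)=L(\ceil{\log_2 b}-\log_2 b)$. The MR-OTP key requirement is $\sum_i \log_2 b_i = L\log_2 b$, which is the entropy of a uniform key on $\keyspace$ (since $P=b^L$). It is also Shannon's lower bound for a uniformly distributed message, by Proposition~2 of~\cite{buono2026a}. The binary OTP needs $\ceil{\log_2 b}$ bits per symbol, because each symbol must be encoded injectively into a binary string, and the shortest fixed-length injective encoding of $b$ symbols uses $\ceil{\log_2 b}$ bits. This gives $L\ceil{\log_2 b}$ bits in total.

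\textbf{The ratio.} The relative saving is the difference of the two key lengths divided by the binary length. The difference is exactly $\Delta(B)$, so
\[
  \frac{L\ceil{\log_2 b}-L\log_2 b}{L\ceil{\log_2 b}} = 1-\frac{\log_2 b}{\ceil{\log_2 b}},
\]
and the factor $L$ cancels. This is pure algebra.

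\textbf{Numerical evaluation.} I will plug in the values from Table~\ref{tab:overhead}:
\begin{itemize}
  \item $b=26$: $1-4.700/5 = 0.060$, i.e.\ $6.0\%$.
  \item $b=10$: $1-3.322/4 \approx 0.1695$, i.e.\ $\approx 16.9\%$.
  \item $b=20$: $1-4.322/5 \approx 0.1356$, i.e.\ $\approx 13.6\%$.
\end{itemize}
I will carry enough decimals in $\log_2 b$ to confirm the rounding is correct, e.g.\ $\log_2 10 \approx 3.3219$ and $\log_2 20 \approx 4.3219$.

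The only step needing care is the first: the claim that the MR-OTP ``requires'' only $L\log_2 b$ bits. In practice Algorithm~\ref{alg:sample} consumes $\ceil{\log_2 P}$ bits per round, plus rejection overhead (Proposition~\ref{prop:cost}). So I will state explicitly that the requirement is measured as key entropy, as in the Remark following Theorem~\ref{thm:e2e}, rather than as raw bits consumed. Under that reading the rest is routine. I would add a brief note that the batched extraction of Section~\ref{sec:batched} approaches this entropy rate asymptotically.
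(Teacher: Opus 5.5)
Your proposal is correct and takes the same approach as the paper, whose proof is simply a direct computation from Definition~\ref{def:overhead} with $b_i=b$ for all $i$; your extra justifications (reading the MR-OTP requirement as key entropy rather than as raw bits consumed by Algorithm~\ref{alg:sample}, and the fixed-length encoding argument for $\ceil{\log_2 b}$) are not in the paper but are consistent with it. One small numerical caveat: the decimal-digit value is $1-\log_2 10/4 \approx 0.16952$, so the stated $16.9\%$ is a truncation of $16.95\%$, and correct one-decimal rounding gives $17.0\%$, not the confirmation you expect.
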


\begin{proof}
Direct computation from Definition~\ref{def:overhead} with $b_i = b$
for all~$i$.
\end{proof}

\begin{remark}[Relation to source coding]
The quantity $\log_2 b$ is the Shannon entropy of a symbol drawn uniformly
from an alphabet of size $b$~\cite{coverThomas2006}, and the binary OTP
wastes $\ceil{\log_2 b} - \log_2 b$ bits per symbol by rounding the
entropy up to the nearest integer number of bits.
The MR-OTP operates in the native symbol space, applying the same
principle that underlies arithmetic coding~\cite{rissanen1976}, where
a message is represented in a numeral system matched to the source
alphabet so that the representation length approaches the source entropy.
The MR-OTP imports this principle into cryptography to reduce
key material consumption.
\end{remark}

\subsection{Efficiency and key rolling}

The efficiency gain compounds across sessions in the key-rolling protocol
of~\cite{buono2026a}.
If two parties exchange $N$ messages each of length $L$ over a uniform
alphabet of size $b$, the total reduction in key material relative to
binary OTP over the entire exchange amounts to
\[
  N \cdot L \cdot \bigl(\ceil{\log_2 b} - \log_2 b\bigr) \text{ bits.}
\]
As a concrete example, for $N=10^6$ messages each containing $L=1000$
amino acid symbols ($b=20$), this quantity is approximately
$6.78 \times 10^8$ bits, corresponding to roughly $85$ megabytes of
key material saved over the lifetime of the shared key pool, and
this reduction extends the period of operation before the pool must
be replenished, in proportion to the saving.

\section{Key Rolling, Session Structure, and Composability}
\label{sec:rolling}

\subsection{The partition protocol from the first paper}

The key-rolling protocol of~\cite{buono2026a} operates by strict
partition of a shared initial key $K_0$.
At session $t$, the active portion of the shared material is split into
three consecutive, non-overlapping segments, namely the pad $K_t^{\mathrm{msg}}$
used to encrypt the current message $M_t$, the base specification
$B_{t+1}$ for the next session, and the pad material $K_{t+1}^{\mathrm{msg}}$
for the next session, with no segment read more than once.
The security proof of Theorem~\ref{thm:e2e} applies to each
session in isolation, because each call to Algorithm~\ref{alg:sample}
draws from a disjoint block of the QKD stream, and the independence of
the blocks under Assumption~\ref{ass:qkd} makes the sessions
independent in the information-theoretic sense.

We now make this multi-session argument precise.

\subsection{Session model and notation}

Let $\mathcal{R}$ be an ideal bit source (Assumption~\ref{ass:qkd}).
A \emph{session schedule} for $N$ sessions is a sequence of base
vectors $B_1, \dots, B_N$ and message lengths $L_1, \dots, L_N$.
For session $t \in \{1,\dots,N\}$, write $P_t = \prod_{i=1}^{L_t} b_{t,i}$
and $k_t = \ceil{\log_2 P_t}$.
Algorithm~\ref{alg:sample} is invoked for session $t$ using a block
$\mathcal{R}_t$ of bits from $\mathcal{R}$, where the blocks
$\mathcal{R}_1, \dots, \mathcal{R}_N$ are consecutive and non-overlapping.
The base vectors $B_1, \dots, B_N$ are drawn from disjoint blocks
$\mathcal{R}_{B,1}, \dots, \mathcal{R}_{B,N}$, also non-overlapping
and non-overlapping with the pad blocks.

\begin{definition}[Partition key rolling]
\label{def:partition-rolling}
A \emph{partition key-rolling scheme} for $N$ sessions is a tuple
$(B_1,\dots,B_N, K_1,\dots,K_N)$ where each $B_t$ and each
$K_t = (k_{t,1},\dots,k_{t,L_t})$ is produced by
Algorithm~\ref{alg:sample} applied to disjoint, non-overlapping
segments of a single ideal bit source $\mathcal{R}$, with no segment
used more than once across all sessions and all base specifications.
\end{definition}

\subsection{Multi-session security theorem}

\begin{theorem}[Multi-session perfect secrecy]
\label{thm:multisession}
Under Assumption~\ref{ass:qkd} and Definition~\ref{def:partition-rolling},
the $N$-session partition key-rolling scheme achieves perfect secrecy
across all sessions at the same time.
For every $t \in \{1,\dots,N\}$, every message distribution
$\Pr[M_t]$, and every ciphertext $C_t = c_t$ with $\Pr[C_t = c_t] > 0$,
\[
  \Pr\bigl[M_t = m \;\big|\; C_t = c_t,\,
    C_1 = c_1,\dots,C_{t-1}=c_{t-1}\bigr]
  \;=\; \Pr[M_t = m]
\]
for all $m$, and the keys $K_1,\dots,K_N$ are mutually independent.
\end{theorem}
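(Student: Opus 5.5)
The argument has two parts. First establish mutual independence of the keys (and of the bases) from the disjointness of the source blocks. Then reduce the multi-session conditional probability to a single-session statement and apply Theorem~1 of~\cite{buono2026a}.

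\textbf{Step 1: independence of the key material.} Under Assumption~\ref{ass:qkd} the bits of $\mathcal{R}$ are i.i.d. uniform. Definition~\ref{def:partition-rolling} assigns the pad blocks $\mathcal{R}_1,\dots,\mathcal{R}_N$ and the base blocks $\mathcal{R}_{B,1},\dots,\mathcal{R}_{B,N}$ to pairwise disjoint segments, so these $2N$ blocks are mutually independent random variables. Each $K_t$ is a measurable function of $\mathcal{R}_t$ (and of $B_t$, which fixes $P_t$ and $k_t$), so the keys are mutually independent. A subtlety is that rejection sampling consumes a random number of bits, so the blocks are not of fixed length. I would handle this by modelling each session's block as its own independent i.i.d. substream. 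Equivalently, by the strong Markov property of the i.i.d. stream at the stopping time where the previous call terminates, the bits read after a stopping time are again i.i.d. uniform and independent of everything read before. Conditionally on $B_t=\beta$, Proposition~\ref{prop:stage2} then shows that $K_t$ is uniform on $\keyspace(\beta)$, and this holds independently of all other blocks.

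\textbf{Step 2: reduction to one session.} Messages are assumed independent of all key material, since the sender picks them without access to the pad. Write $C_s=M_s\oplus_{B_s}K_s$ for the digitwise modular addition. The past ciphertexts $(C_1,\dots,C_{t-1})$ are functions of $(M_s,K_s,B_s)_{s<t}$. By Step 1, the tuple $(K_t,B_t)$ is independent of $\bigl((K_s,B_s)_{s<t},(M_s)_{s\le N}\bigr)$. I would first compute
\[
\Pr\bigl[C_t=c_t \mid M_t=m,\,C_1=c_1,\dots,C_{t-1}=c_{t-1}\bigr].
\]
Conditioning on $B_t=\beta$, this equals $\Pr[K_t=c_t\ominus_\beta m]=1/P_t(\beta)$ whenever $c_t\in\keyspace(\beta)$, and $0$ otherwise. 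The value does not depend on $m$ or on the past ciphertexts. Averaging over $\beta$ keeps it independent of $m$. Bayes' rule then gives
\[
\Pr\bigl[M_t=m \mid C_t=c_t,\,C_{<t}=c_{<t}\bigr]=\Pr\bigl[M_t=m \mid C_{<t}=c_{<t}\bigr].
\]

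\textbf{Step 3: the main obstacle.} It remains to show $\Pr[M_t=m\mid C_{<t}=c_{<t}]=\Pr[M_t=m]$. If messages across sessions are correlated, this requires the past ciphertexts to be independent of the whole message vector, not just of $M_t$. I would prove this by induction on $t$, applying the computation of Step 2 to the joint vector $(M_1,\dots,M_N)$. Each new ciphertext $C_s$ is uniform on $\keyspace(B_s)$ given all messages and all earlier ciphertexts. Hence $(C_1,\dots,C_{t-1})$ has a distribution, given $(M_1,\dots,M_N)$, that does not depend on the messages, so the ciphertexts are jointly independent of all messages. This yields the displayed identity.

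The delicate point is the interaction of random block lengths with disjointness, which the stopping-time argument in Step 1 is meant to settle. A second point is that the message lengths $L_t$ must be fixed in the schedule and not chosen adaptively from the keys. I would state explicitly that they are part of the public schedule.
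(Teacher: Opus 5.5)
Your proof is correct and follows the paper's outline: disjoint blocks of an i.i.d.\ source give independent key material, and the multi-session claim is then reduced to single-session perfect secrecy. You carry out the reduction differently and more completely.

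The paper asserts that ``the pair $(K_t,C_t)$ is independent of $(C_1,\dots,C_{t-1})$''. It justifies this only by the independence of $K_t$ from the earlier keys and from the bases, and then applies Theorem~1 of~\cite{buono2026a} session by session. But $C_t$ depends on $M_t$, and $M_t$ may be correlated with $M_1,\dots,M_{t-1}$ and hence, a priori, with the earlier ciphertexts. That justification therefore silently requires the ciphertext history to be independent of the messages. Your Step~3 induction is exactly the missing argument. With it, the statement holds for an arbitrary joint distribution of $(M_1,\dots,M_N)$, whereas the paper's justification goes through directly only when messages of different sessions are independent.

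Your other additions also fill silent gaps. The Bayes computation conditioned on $B_t=\beta$ and then averaged over $\beta$ is what actually supports the paper's later remark that the bases may follow any distribution; the paper's proof treats each $K_t$ as uniform on a fixed $\keyspace_t$. You also handle the random number of bits consumed by rejection sampling via the stopping-time argument. Finally, you state explicitly that messages are chosen independently of all key material (in particular, not adaptively from earlier ciphertexts) and that the $L_t$ are fixed in advance; the paper uses both hypotheses without saying so. The paper's version buys brevity; yours proves the statement as written.

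Two small points. First, when $B_s$ is random, the conditional law of $C_s$ given all messages and earlier ciphertexts is the fixed mixture $\sum_\beta \Pr[B_s=\beta]\,\mathrm{Unif}(\keyspace(\beta))$, not literally ``uniform on $\keyspace(B_s)$''. Your argument only uses that this law does not depend on the messages or the past, which is correct.

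Second, Steps~2 and~3 can be merged. Write $g_s(c_s)$ for the $m$-independent value you compute in Step~2. The chain rule then gives $\Pr[C_1=c_1,\dots,C_t=c_t \mid M_1=m_1,\dots,M_N=m_N]=\prod_{s\le t} g_s(c_s)$. So the whole ciphertext history up to session $t$ is independent of the whole message vector, and the displayed identity follows at once.
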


\begin{proof}
By Definition~\ref{def:partition-rolling}, the bit blocks used to
produce $K_1,\dots,K_N$ and $B_1,\dots,B_N$ are pairwise disjoint.
Since $\mathcal{R}$ is i.i.d.\ under Assumption~\ref{ass:qkd}, any
collection of disjoint blocks is jointly independent.
By Theorem~\ref{thm:uniformity}, Algorithm~\ref{alg:sample} applied
to disjoint blocks produces keys that are each uniform on their
respective $\keyspace_t$ and mutually independent across sessions.
The ciphertexts $C_1,\dots,C_{t-1}$ are functions of
$M_1,\dots,M_{t-1}$ and $K_1,\dots,K_{t-1}$ only, and since $K_t$
is independent of all $K_s$ with $s < t$ and independent of all
$B_s$, the pair $(K_t, C_t)$ is independent of
$(C_1,\dots,C_{t-1})$.
Applying Theorem~1 of~\cite{buono2026a} session by session then gives
perfect secrecy for each $M_t$ conditioned on the entire ciphertext
history.
\end{proof}

\begin{remark}
Theorem~\ref{thm:multisession} formalizes the intuition of~\cite{buono2026a}
that strict partition of key material is both necessary and sufficient
for multi-session security.
The proof uses no property of the base sequence $B_t$ other
than that it was drawn from a block of $\mathcal{R}$ disjoint from the
pad blocks, which means the result holds whether the bases are fixed,
drawn uniformly, or drawn according to any other distribution, as long
as the disjointness condition is satisfied.
\end{remark}

\subsection{The dynamic key rolling variant and its open status}

The first paper describes a variant in which the base sequence $B_{t+1}$
for the next session is derived from fresh, unused key material.
A more aggressive variant introduced here as a new direction,
and left without a formal analysis, derives $B_{t+1}$ as a function
of the current ciphertext $C_t$ and the previous base sequence $B_t$,
through a deterministic transformation $f(C_t, B_t)$.

This variant makes the base structure depend on transmitted data, so the
sequence of digit spaces changes as a function of the communication
history and is unpredictable to an observer who knows $C_t$ but has no
access to $B_t$.
The dependency introduces a complication for the disjointness argument
of Theorem~\ref{thm:multisession}.

When $B_{t+1} = f(C_t, B_t)$, the value $B_{t+1}$ is a deterministic
function of $C_t$, which is public, and of $B_t$, which is secret.
An adversary who recovers $B_t$ by any means thereby recovers $B_{t+1}$
and the base sequence at all future sessions, making the security of
those sessions dependent on the secrecy of $B_t$.
This forward dependency is absent in the partition protocol, where each
$B_{t+1}$ is drawn from a fresh block of bits with no connection to
previous sessions.

\begin{openprob}
\label{op:dynamic}
Let $f \colon \mathcal{C} \times \mathcal{B} \to \mathcal{B}$ be a
deterministic function, where $\mathcal{C}$ is the ciphertext space and
$\mathcal{B}$ is the space of valid base sequences.
Define the dynamic key rolling scheme by $B_{t+1} = f(C_t, B_t)$ with
$B_1$ drawn from fresh key material.
Using the Syntactic Invariance Principle of~\cite{buono2026b}, the
problem reduces to finding a property $P$ of the joint distribution of
$(C_1,\dots,C_t, B_1,\dots,B_t, M_1,\dots,M_t)$ such that $P$ holds
after session~$1$, that each application of $f$ and each MR-OTP
encryption step preserves $P$, and that $P$ implies
$\Pr[M_t = m \mid C_t = c, C_1=c_1,\dots,C_{t-1}=c_{t-1}] = \Pr[M_t=m]$
for all $t$.
The question is whether any such $P$ exists for a function $f$ that
gives an adversary bounded in computation a larger base search space
than the partition protocol, without requiring the
disjointness condition of Definition~\ref{def:partition-rolling}.
\end{openprob}

The Syntactic Invariance Principle reformulation makes the structure of
the difficulty precise.
In the partition protocol, the invariant $P$ is immediate because
disjoint blocks are independent by construction.
In the dynamic scheme, $B_{t+1}$ is a function of $C_t$, which is
public, so any invariant $P$ must survive the exposure of $C_t$ at
every session boundary.
A necessary condition is that $f$ satisfies the property that the
distribution of $B_{t+1}$ given $C_t$ carries no information about
$K_t$, which is the condition that prevents $C_t$ from acting as a
bridge between the base sequence and the key.
Whether any $f$ of practical interest satisfies this condition, and
whether the corresponding invariant $P$ can be verified, remains open.

\subsection{Key consumption and overhead of base transmission}

Because the bases $B_{t+1}$ are themselves transmitted using fresh key material, their
encoding has a cost that must be accounted for. Suppose each base $b_i$ requires at
most $\beta$ bits to represent in the shared key (where $\beta$ depends on the
agreed-upon base alphabet, but is left as a parameter here to keep the result general).
Then the total key material consumed per session $t$ is
\[
  \underbrace{\sum_{i=1}^{L}\log_2 b_i}_{\text{pad}} \;+\; \underbrace{L\beta}_{\text{bases}},
\]
and the overhead fraction relative to the pad is $L\beta / \sum_i \log_2 b_i$. This
overhead is negligible when
\[
  \frac{1}{L}\sum_{i=1}^{L}\log_2 b_i \;\gg\; \beta,
\]
that is, when the average base is large relative to its own representation cost. In
typical cases of interest (e.g.\ bases matched to natural-language or biological
alphabets, where $b_i \ge 4$) this condition holds comfortably. In the degenerate case
$b_i = 2$ for all $i$, the overhead equals the pad length and the mixed-radix structure
provides no representational gain over the binary OTP, as expected.

\section{Open Problems}
\label{sec:open}

\begin{enumerate}[label=\arabic*.]

  \item \textbf{Optimal base selection.}
        Corollary~\ref{cor:restricted-support} and
        Remark~\ref{rem:optimal-base} establish that when the message
        distribution has product support $S = \prod_i S_i$, the base
        sequence $b^*_i = |S_i|$ achieves the Shannon bound with equality.
        The general case, where the support $S$ does not decompose as a
        product and the distribution over $S$ is not uniform, requires
        finding a base sequence $B$ that minimizes the expected key
        material per unit of source entropy subject to the uniformity
        constraint of Algorithm~\ref{alg:sample}.
        This problem relates to the classical problem of optimal
        prefix-free coding~\cite{coverThomas2006}, with the distinguishing
        constraint being uniform distribution of the key.

  \item \textbf{Dynamic key rolling.}
        Open Problem~\ref{op:dynamic} asks for conditions on the
        transformation $f$ that make the dynamic key-rolling variant
        secure. A weaker version of the problem asks whether the dynamic
        scheme gives an adversary bounded in computation a larger search
        space than the partition protocol,
        without relying on any unproven hardness assumption.

  \item \textbf{Computational hardness of base recovery.}
        When the base sequence $B$ is kept secret (as discussed in
        Proposition~2 of~\cite{buono2026a}), an adversary must search
        the space of valid base sequences before any decryption attempt.
        The size of this space grows as $(b_{\max}-1)^L$ and is
        exponential in $L$. Characterizing the average-case hardness
        of this search problem, and whether it admits a reduction from
        a known hard problem or a lower bound on the number of required
        operations, is an open problem developed further in
        Section~\ref{sec:hardness}.

  \item \textbf{Batched key extraction.}
        Algorithm~\ref{alg:sample} generates one key tuple per
        invocation, consuming an expected $k \cdot 2^k/P$ bits.
        For long messages, a single rejection-sampling step applied to
        an integer drawn from $\{0,\dots,P^N-1\}$ and decomposed
        via a single cascaded inverse Horner pass over $N$ positions
        may reduce overhead relative to $N$ independent invocations.
        The theoretical analysis of this approach, including the
        acceptance probability and a proof that the joint distribution
        of all $N$ digits is uniform and independent, is developed in
        Section~\ref{sec:batched}.

  \item \textbf{$\varepsilon$-security composition.}
        Theorem~\ref{thm:e2e} and Theorem~\ref{thm:multisession} both
        assume an ideal QKD source (Assumption~\ref{ass:qkd}).
        For real QKD systems whose output is $\varepsilon$-close to
        the ideal distribution in trace distance, the end-to-end
        security degradation as a function of $\varepsilon$, the number
        of sessions $N$, and the session parameters $L_t$ and $P_t$
        is characterized in Section~\ref{sec:epsilon}.

\end{enumerate}
\section{Computational Hardness of Base Recovery}
\label{sec:hardness}

\subsection{Setup and motivation}

The perfect secrecy of the MR-OTP holds for adversaries with arbitrary
computational resources, with no restriction on their nature.
Proposition~2 of~\cite{buono2026a} establishes that keeping the base
sequence $B$ secret provides no reduction in the information-theoretic
key length required.
When $B$ is secret, any adversary who wishes to attempt decryption must
first identify $B$, and this section formalizes the structure of that
identification problem and relates it to known computational problems.

Throughout this section we assume the bases are drawn from a finite set,
writing $b_{\max}$ for the maximum permitted base value, so that each
$b_i \in \{2,3,\dots,b_{\max}\}$ and the space of valid base sequences
of length $L$ has cardinality $(b_{\max}-1)^L$.

\subsection{Ciphertext-only indistinguishability of the base sequence}

Before stating the recovery problem, we establish a structural fact
about what the ciphertext reveals about $B$.

\begin{proposition}[Ciphertext independence from bases]
\label{prop:ciphertext-indep}
Let $B$ be any fixed base sequence and $K$ be uniform on $\keyspace_B$.
Then for every $C = (c_1,\dots,c_L)$ with each $c_i \in \{0,\dots,b_i-1\}$,
\[
  \Pr[C = c \mid B = b] \;=\; \prod_{i=1}^{L} \frac{1}{b_i}
\]
regardless of $M$, and in particular this probability depends on $b$ only
through the values $b_1,\dots,b_L$ and through no other property of the key.
\end{proposition}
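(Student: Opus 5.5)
The plan is to condition on both $B=b$ and $M=m$ and show that the resulting probability is already $\prod_i 1/b_i$, independent of $m$. Summing over the message distribution then gives the claim "regardless of $M$". The only ingredients are the MR-OTP encryption rule $c_i = (m_i + k_i) \bmod b_i$ from~\cite{buono2026a} and the uniformity and independence of the key digits established in Theorem~\ref{thm:uniformity}. For the latter, Proposition~\ref{prop:stage2} confirms that the key produced by Algorithm~\ref{alg:sample} has exactly this property.

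\textbf{Step 1: independence of the key from the message.} Fix $b$ and $m \in \keyspace_b$. We use the standing assumption of the MR-OTP model that $K$ is drawn independently of $M$, as is the case in the pipeline, where $K$ comes from QKD bits. Then $\Pr[C=c \mid B=b, M=m] = \Pr[K = k^*]$, where $k^*_i = (c_i - m_i) \bmod b_i$.

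\textbf{Step 2: evaluating the probability.} For each position $i$, the map $k_i \mapsto (m_i + k_i) \bmod b_i$ is a bijection of $\mathbb{Z}/b_i\mathbb{Z}$, namely a translation in the cyclic group. Hence there is exactly one key tuple $k^*$ producing $c$ from $m$. By Theorem~\ref{thm:uniformity}, this gives
\[
  \Pr[K=k^*] = \frac{1}{P} = \prod_{i=1}^L \frac{1}{b_i}.
\]
Equivalently, by the product structure proved there, each ciphertext digit $c_i$ is uniform on $\{0,\dots,b_i-1\}$, and the digits are independent because they are coordinatewise translates of independent uniform digits.

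\textbf{Step 3: removing the conditioning on $M$.} Averaging over any distribution of $M$ with $\sum_m \Pr[M=m \mid B=b]=1$ leaves the constant $\prod_i 1/b_i$ unchanged. This holds even if $M$ is allowed to depend on $B$, since the conditional value is the same for every $m$. The final clause of the statement, that the probability depends on $b$ only through $b_1,\dots,b_L$, is then read off the formula: the expression involves no quantity other than the bases. In particular it involves neither the key value nor $m$.

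The main obstacle is conceptual rather than computational. One must make explicit that $K$ is independent of $M$ given $B$, and that the ciphertext range is taken to be exactly $\keyspace_b$. If $c$ lay outside this range, the probability would be $0$, which is why the hypothesis $c_i < b_i$ appears in the statement. I would state the independence of $K$ and $M$ as inherited from the model of~\cite{buono2026a} and from the disjoint-block construction of Section~\ref{sec:qkd}. The remaining steps are the translation-bijection count.
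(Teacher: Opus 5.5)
Your proof is correct and follows essentially the paper's route. The paper also argues that each $c_i$ is a translate of the uniform digit $k_i$ in $\mathbb{Z}/b_i\mathbb{Z}$, hence uniform regardless of $m_i$, with independence across positions inherited from the independence of the $k_i$; this is your per-digit ``equivalently'' remark, and your unique-preimage count $\Pr[K=k^*]=1/P$ is the same fact stated jointly. Your explicit statement that $K$ must be independent of $M$ given $B$ is a useful addition, since the paper leaves it implicit by deferring to Step~1 of Theorem~1 of~\cite{buono2026a}.
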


\begin{proof}
This is a restatement of Step~1 of the proof of Theorem~1
of~\cite{buono2026a}: for each position $i$, the digit $k_i$ is uniform on
$\{0,\dots,b_i-1\}$, so $c_i$ is uniform on $\{0,\dots,b_i-1\}$
independently of $m_i$.
The positions are independent by the independence of the $k_i$.
\end{proof}

\begin{corollary}[Ciphertext-only insolubility of base recovery]
\label{cor:co-insoluble}
An adversary given only $C$ and the parameters $L$ and $b_{\max}$ obtains
zero information about $B$ beyond the necessary constraint $b_i > c_i$ for
each $i$, and every base sequence $B'$ satisfying that constraint is
consistent with the observed ciphertext under some key.
The base recovery problem in the ciphertext-only setting is therefore
insoluble in the information-theoretic sense, with feasible set
$\mathcal{F}(C) = \{B' \in \{2,\dots,b_{\max}\}^L : b'_i > c_i\ \forall i\}$,
and the maximum probability with which any algorithm identifies $B$ from $C$
alone is bounded above by the uniform distribution over $\mathcal{F}(C)$.
\end{corollary}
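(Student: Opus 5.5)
The plan is to derive everything from Proposition~\ref{prop:ciphertext-indep} together with Bayes' rule, taking a prior distribution $\pi$ on the base space $\{2,\dots,b_{\max}\}^L$. Since the key is drawn independently of the message, the likelihood of an observed ciphertext $c$ under a candidate $b'$ is
\[
  \Pr[C=c \mid B=b'] \;=\; \mathbf{1}[b'_i > c_i\ \forall i]\cdot\prod_{i=1}^L \frac{1}{b'_i},
\]
and this holds regardless of $M$. The indicator comes from the requirement that each $c_i$ lie in $\{0,\dots,b'_i-1\}$. The product factor comes from Proposition~\ref{prop:ciphertext-indep}.

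\textbf{Consistency.} For every $B'\in\mathcal{F}(C)$ and every message $m$ with $m_i<b'_i$, I will exhibit the key $k_i = (c_i - m_i)\bmod b'_i$. It lies in $\keyspace_{B'}$, encrypts $m$ to $c$, and has positive probability because $K$ is uniform. Hence every member of $\mathcal{F}(C)$ is consistent with the observed ciphertext, and no $B'$ outside $\mathcal{F}(C)$ is. This gives the feasible-set claim.

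\textbf{Zero information beyond the constraint.} From the likelihood above, the posterior is $\pi$ restricted to $\mathcal{F}(C)$ and reweighted by $\prod_i 1/b'_i$. The data enter only through which bases survive the support test $b'_i>c_i$; no message-dependent or key-dependent information appears. The success probability of any guessing rule $g(C)$ is then $\sum_c \Pr[C=c]\,\Pr[B=g(c)\mid C=c]$. This is maximized by the MAP rule, and the resulting bound is the maximum posterior mass on $\mathcal{F}(C)$.

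\textbf{Main obstacle: the uniform-over-$\mathcal{F}(C)$ bound.} Read literally, the claim that success is bounded by $1/|\mathcal{F}(C)|$ is false for a general prior, since the posterior is nonuniform whenever the $b'_i$ differ. I would therefore formalize the statement under the paper's implicit model: bases uniform on $\{2,\dots,b_{\max}\}^L$, or, more carefully, an adversary whose prior is uninformative in the sense that it does not favour any consistent base sequence. I would state the $1/|\mathcal{F}(C)|$ figure as the guarantee attained under a prior that exactly cancels the likelihood weights $\prod_i 1/b'_i$.

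In the general case I would instead prove the honest form: success is at most the maximum posterior probability on $\mathcal{F}(C)$, which factorizes over positions as $\prod_i \max_{b'_i>c_i}\pi_i(b'_i)/b'_i$ (suitably normalized) when $\pi$ is a product prior. I would then remark that this factorization shows the ciphertext prunes each coordinate only by the threshold $c_i$. Pinning down which prior makes the ``uniform over $\mathcal{F}(C)$'' reading exact is the step needing care.
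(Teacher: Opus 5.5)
Your route is the paper's: you take the likelihood from Proposition~\ref{prop:ciphertext-indep}, apply Bayes' rule, and bound success by the MAP rule, and you also make the consistency witness $k_i=(c_i-m_i)\bmod b'_i$ explicit. Your suspicion about the last clause is right. It is exactly where the paper's proof breaks: the paper derives a posterior $\propto\prod_i 1/b_i$ on $\mathcal{F}(C)$ under a uniform prior, then simply asserts that the smallest-product MAP guess is correct with probability at most $1/|\mathcal{F}(C)|$ ``in the worst case''. The step in your plan that fails is your first fallback, ``the paper's implicit model: bases uniform''. The uniform prior is itself a counterexample, for the very reason you give. Take $L=1$, $b_{\max}=3$, $B$ uniform on $\{2,3\}$ and $c=0$. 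Then $\mathcal{F}(c)=\{2,3\}$, but $\Pr[B=2\mid C=0]=\frac{1/4}{1/4+1/6}=\frac35>\frac12$. Averaged over $C$, the MAP rule succeeds with probability $2/3$, against $\mathbb{E}[1/|\mathcal{F}(C)|]=7/12$. In general, the uniform-prior MAP guess $b^\star_i=\max(2,c_i+1)$ succeeds given $C=c$ with probability
\[
  \prod_{i=1}^{L}\frac{1/b^\star_i}{\sum_{\beta=b^\star_i}^{b_{\max}}1/\beta}.
\]
This strictly exceeds $1/|\mathcal{F}(c)|=\prod_i(b_{\max}-b^\star_i+1)^{-1}$ whenever $|\mathcal{F}(c)|>1$, because $1/b^\star_i$ is the largest term of each sum.

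Your second fallback is the only one that works. The posterior is supported on $\mathcal{F}(c)$, so its maximum is at least $1/|\mathcal{F}(c)|$, with equality if and only if it is uniform there. Taking $c=0$, where $\mathcal{F}(0)$ is the whole base space, shows that the clause holds for every $c$ if and only if $\pi(b')\propto\prod_i b'_i$. The tilted prior you mention is therefore the unique rescue, and it is not the prior of Definition~\ref{def:brp}.

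The same care is needed for ``zero information beyond the constraint''. Your likelihood computation proves only the weak form: for two feasible candidates the likelihood ratio $\prod_i b''_i/b'_i$ does not depend on $c$, so equal-product candidates are exactly indistinguishable. That weak form is all the paper's proof actually shows. The strong form says the posterior equals the prior conditioned on $B\in\mathcal{F}(c)$. It fails whenever the prior gives positive mass to two sequences in $\mathcal{F}(c)$ with different products, and this includes the tilted prior.

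A correct write-up states three things: the consistency claim, this weak indistinguishability, and the bound ``success $\le\max_{b'}\Pr[B=b'\mid C=c]$'' with your product formula. It should either drop the $1/|\mathcal{F}(C)|$ clause or tie it explicitly to the tilted prior.
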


\begin{proof}
By Proposition~\ref{prop:ciphertext-indep}, the likelihood
$\Pr[C=c \mid B=b]$ depends on $b$ only through $\prod_i 1/b_i$.
Applying Bayes' rule with a uniform prior on $B$, the posterior
$\Pr[B=b \mid C=c]$ is proportional to $\prod_i 1/b_i$ over the feasible
set $\mathcal{F}(C)$.
For any two base sequences $B, B'$ in $\mathcal{F}(C)$ with
$\prod_i b_i = \prod_i b'_i$, the posterior assigns them equal probability,
and the ciphertext $C$ carries zero distinguishing information between them.
The maximum a posteriori estimate of $B$ given $C$ is the sequence in
$\mathcal{F}(C)$ with smallest product $\prod_i b_i$, which coincides with
the true $B$ with probability at most $1/|\mathcal{F}(C)|$ in the worst case,
so identification succeeds with probability bounded above by $1/|\mathcal{F}(C)|$
for any algorithm.
\end{proof}

\begin{remark}
Corollary~\ref{cor:co-insoluble} establishes that the ciphertext-only
setting is the wrong frame for hardness analysis, since the problem is
insoluble for information-theoretic reasons, so the question of
computational hardness lives in a different setting.
The setting with computational content is known-plaintext, where the
adversary holds one or more plaintext-ciphertext pairs and the constraints
$c_i = (m_i + k_i) \bmod b_i$ become informative about $b_i$.
The Base Recovery Problem is therefore stated in the known-plaintext
setting throughout the remainder of this section.
\end{remark}

\subsection{The base recovery problem}

We state the computational problem in the known-plaintext setting,
and accompany it with a weaker decision variant.

\begin{definition}[Base Recovery Problem, BRP]
\label{def:brp}
Let $B = (b_1,\dots,b_L)$ be a base sequence drawn uniformly at random
from $\{2,\dots,b_{\max}\}^L$.
The adversary is given $T \ge 1$ plaintext-ciphertext pairs
$(M^{(1)}, C^{(1)}), \dots, (M^{(T)}, C^{(T)})$, where each pair is
encrypted under a fresh key $K^{(t)}$ drawn uniformly from
$\keyspace_B$, and the base sequence $B$ is the same for all $T$ pairs.
Given these pairs and the parameters $L$, $b_{\max}$, and $T$,
the Base Recovery Problem asks to output $B$.
\end{definition}

\begin{definition}[Base Distinguishing Problem, BDP]
\label{def:bdp}
Under the same setup as Definition~\ref{def:brp}, given the $T$
plaintext-ciphertext pairs and a candidate base sequence $B'$, decide
whether $B' = B$.
\end{definition}

\begin{remark}
The known-plaintext formulation is the natural setting for base recovery
because each pair $(M^{(t)}, C^{(t)})$ imposes the constraint
$c^{(t)}_i \equiv m^{(t)}_i + k^{(t)}_i \pmod{b_i}$, which bounds
$b_i$ from below by $\max(c^{(t)}_i, m^{(t)}_i \bmod b_i) + 1$ and
creates residue conditions that a correct $b_i$ must satisfy and a
wrong $b_i$ will violate with high probability.
With $T$ independent pairs under fresh keys, the feasible set of base
sequences shrinks as $T$ grows, and for generic $(M, C)$ the set of
bases consistent with all $T$ pairs converges to a small set as
$T \to \infty$, a behavior that stands in contrast with the ciphertext-only
setting of Corollary~\ref{cor:co-insoluble}, where every base sequence in
$\mathcal{F}(C)$ remains consistent with the observation.
\end{remark}

\subsection{Information-theoretic analysis}

\begin{proposition}[Residual base uncertainty]
\label{prop:residual-uncertainty}
In the known-plaintext setting with $T$ pairs under fresh independent
keys, the feasible set of base sequences consistent with all observations is
\[
  \mathcal{F}_T \;=\; \Bigl\{ B' \in \{2,\dots,b_{\max}\}^L :
    c^{(t)}_i \equiv m^{(t)}_i + k \pmod{b'_i}\ \text{for some }
    k \in \{0,\dots,b'_i-1\},\ \forall i,t \Bigr\},
\]
and $|\mathcal{F}_T|$ decreases in $T$.
For each position $i$, the constraint imposed by one pair eliminates
all bases $b'_i$ that are incompatible with the residue
$(c^{(t)}_i - m^{(t)}_i)$ being representable modulo $b'_i$.
\end{proposition}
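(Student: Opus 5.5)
I will prove the statement in three parts: the feasible set has the displayed form, it is nonincreasing in $T$, and the per-position elimination rule holds.

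\emph{Description of the feasible set.} A candidate $B'$ is consistent with the observations exactly when there exist keys $K'^{(1)},\dots,K'^{(T)}$ in $\keyspace_{B'}$ such that every observed pair arises from MR-OTP encryption under $B'$, that is, $c^{(t)}_i = (m^{(t)}_i + k'^{(t)}_i) \bmod b'_i$ for all $i$ and $t$. By Definition~\ref{def:brp} the keys are drawn independently and freshly for each pair, and within a key the digits are independent (Theorem~\ref{thm:uniformity}). The existence question therefore factorizes over positions and over pairs, with no coupling across different $t$ or $i$. For a single $(i,t)$ the condition reads $c^{(t)}_i \equiv m^{(t)}_i + k \pmod{b'_i}$ for some $k \in \{0,\dots,b'_i-1\}$, together with the range requirement that the observed digits are valid residues modulo $b'_i$. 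Taking the conjunction over all $i$ and $t$ gives exactly the set $\mathcal{F}_T$ in the statement. By Proposition~\ref{prop:ciphertext-indep} applied to each fresh key, every $B'$ in this set has positive likelihood, and every $B'$ outside it has likelihood zero, so this set is precisely the set of consistent base sequences.

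\emph{Monotonicity.} Moving from $T$ to $T+1$ pairs adds constraints and removes none, so $\mathcal{F}_{T+1} \subseteq \mathcal{F}_T$. Consequently $|\mathcal{F}_T|$ is nonincreasing in $T$. I will read ``decreases'' in the statement as this weak monotonicity. Strict decrease cannot be claimed in general, because a repeated pair $(M^{(t)},C^{(t)})$ adds no new constraint.

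\emph{Per-position elimination.} The constraint from pair $t$ at position $i$ involves only $b'_i$, $m^{(t)}_i$ and $c^{(t)}_i$. Hence $\mathcal{F}_T$ is a product $\prod_i \mathcal{F}_{T,i}$ of per-position sets, which makes the per-position statement immediate: a base $b'_i$ is removed exactly when the residue $(c^{(t)}_i - m^{(t)}_i) \bmod b'_i$ cannot be realized by an admissible key digit under the range constraints.

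\emph{Main obstacle.} The delicate step is making precise what ``representable'' means. As literally written, the condition ``$\exists k \in \{0,\dots,b'_i-1\}$'' is vacuous, since every residue is attained by some $k$. The elimination therefore comes entirely from the range conditions $0 \le c^{(t)}_i < b'_i$ and $m^{(t)}_i < b'_i$, and I must state these explicitly as part of the definition of consistency. Once that definition is fixed, both the displayed description of $\mathcal{F}_T$ and the monotonicity claim follow from the intersection structure alone.
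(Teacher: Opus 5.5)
Your proposal is correct and follows essentially the same route as the paper, which likewise reads off the range requirement $b'_i > \max(c^{(t)}_i, m^{(t)}_i)$ from the encryption equation and gets $\mathcal{F}_T \subseteq \mathcal{F}_{T-1}$ because each fresh pair only adds constraints (so ``decreases'' means non-increasing there too). Your remark that the existential residue condition is vacuous, so that all elimination comes from the range conditions, is accurate: it makes explicit what the paper's proof relies on implicitly, since the paper's clause that $(c^{(t)}_i - m^{(t)}_i) \bmod b_i$ be a valid digit adds no constraint.
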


\begin{proof}
The encryption equation $c^{(t)}_i = (m^{(t)}_i + k^{(t)}_i) \bmod b_i$
with $k^{(t)}_i \in \{0,\dots,b_i-1\}$ requires $b_i > \max(c^{(t)}_i,
m^{(t)}_i)$ and that $(c^{(t)}_i - m^{(t)}_i) \bmod b_i$ is a valid
digit, which eliminates all $b'_i$ that fail the residue condition.
Independence of the $T$ pairs follows from the fresh key assumption,
and each pair contributes an independent set of constraints, so
$\mathcal{F}_T \subseteq \mathcal{F}_{T-1}$.
\end{proof}

\subsection{Structural comparison with known hard problems}

The Base Recovery Problem shares structural features with two families
of well-studied computational problems, and the comparison reveals where
the BRP requires its own treatment.

\paragraph{Comparison with Learning With Errors.}
The LWE problem~\cite{regev2009} asks, given a matrix $A$ over
$\mathbb{Z}_q$ and a vector $b = As + e$ where $s$ is secret and $e$ is
a small error vector, to recover $s$.
In the BRP, the ``matrix'' is the identity (each position is
independent), the ``error'' is the key digit $k_i$, and the ``secret''
is the base $b_i$.
The structural difference that distinguishes the two problems is that in
LWE the modulus $q$ is fixed and public, while in the BRP the modulus
$b_i$ is the unknown quantity, and recovering an unknown modulus from
modular residues is a problem outside the scope of LWE for which the
known reductions from lattice problems to LWE are inapplicable.

\paragraph{Comparison with syndrome decoding.}
The syndrome decoding problem~\cite{berlekamp1978} asks, given a
parity-check matrix $H$ over $\mathbb{F}_q$ and a syndrome $s = He$,
to find a low-weight error vector $e$.
The BRP involves no weight constraint on $k_i$ and no linear
algebraic structure over a fixed field, because the moduli vary by
position and the positions share no ambient field, so the NP-hardness
result for syndrome decoding requires a different algebraic setup from
the one present in the BRP.

\paragraph{Closer analogy: modular arithmetic with unknown modulus.}
The BRP is most closely related to the problem of recovering an unknown
modulus from a sequence of residues, given observations
$c_i^{(t)} \equiv m_i^{(t)} + k_i^{(t)} \pmod{b_i}$
for unknown $b_i$ and unknown $k_i^{(t)}$, where the problem asks to
identify $b_i$ from the pairs $(m_i^{(t)}, c_i^{(t)})$.
This problem is related in spirit to recovering an unknown period
in a sequence of modular residues, for which no polynomial-time algorithm
is known in general and for which no hardness proof exists in the
cryptographic sense.

\subsection{A provable lower bound in the query model}

A formal lower bound on the number of queries required to identify $B$
is provable in the following model, which captures adversaries that work
by testing candidate base sequences against the known-plaintext data.

\begin{definition}[Query model for BRP]
\label{def:query-model}
A \emph{query adversary} for the BRP is a deterministic algorithm that
holds $T$ known-plaintext pairs and, at each step, selects a candidate
base sequence $B' \in \mathcal{F}_T$ and receives a single bit
indicating whether $B' = B$.
The \emph{query complexity} of the BRP is the minimum number of queries
sufficient to identify $B$ with certainty in the worst case.
\end{definition}

\begin{remark}
The oracle in Definition~\ref{def:query-model} is well-defined in the
known-plaintext setting because a membership oracle for the feasible set
$\mathcal{F}_T$ can be implemented by checking whether $B'$ is
consistent with all $T$ observed pairs, which is a computation on the
given data alone.
This contrasts with the ciphertext-only setting, where every base
sequence in $\mathcal{F}(C)$ is consistent with the observation and
the oracle answer is always positive, so the oracle carries zero
discriminating power.
\end{remark}

\begin{theorem}[Query lower bound for BRP]
\label{thm:query-lower-bound}
Every deterministic query adversary for the BRP requires at least
$|\mathcal{F}_T| - 1$ queries in the worst case to identify $B$ from
$T$ known-plaintext pairs.
In the regime where $T$ is small relative to $L$ and $b_{\max}$, so
that $|\mathcal{F}_T|$ remains exponential in $L$, this lower bound
is $\Omega((b_{\max}-1)^{L - \alpha T})$ for a constant $\alpha$
depending on $b_{\max}$.
\end{theorem}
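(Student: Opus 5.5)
The plan has two parts: a standard adversary argument for the $|\mathcal{F}_T|-1$ bound, and then a counting estimate on $|\mathcal{F}_T|$ in the small-$T$ regime.

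\textbf{Lower bound $|\mathcal{F}_T|-1$.} Fix the $T$ known-plaintext pairs, so that $\mathcal{F}_T$ is determined as in Proposition~\ref{prop:residual-uncertainty}. Every element of $\mathcal{F}_T$ is a possible value of $B$: for each $B'\in\mathcal{F}_T$ there exist digits $k^{(t)}_i$ with $c^{(t)}_i=(m^{(t)}_i+k^{(t)}_i)\bmod b'_i$, and these occur with positive probability under fresh uniform keys. A deterministic query adversary is therefore a decision tree whose internal nodes query some $B'\in\mathcal{F}_T$. The adversary (oracle) strategy answers ``no'' to every query as long as at least two candidates remain consistent. Each query removes at most one candidate, namely the queried one. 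After $q$ queries with only negative answers, at least $|\mathcal{F}_T|-q$ candidates remain. Identification with certainty requires that exactly one candidate remain, so $q\ge|\mathcal{F}_T|-1$. Since the adversary is deterministic, the sequence of queries is fixed along this all-``no'' path, and the true $B$ can be taken to be the last surviving candidate. That candidate lies in the support of the uniform prior, so the bound holds in the worst case.

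\textbf{Estimate of $|\mathcal{F}_T|$.} The feasible set factorises position by position as $\mathcal{F}_T=\prod_i F_{T,i}$, where $F_{T,i}\subseteq\{2,\dots,b_{\max}\}$ is the set of $b'$ consistent with the column $(m^{(t)}_i,c^{(t)}_i)_{t\le T}$. Consistency amounts to $b'>\max_t\max(c^{(t)}_i,m^{(t)}_i)$, since the residue condition is otherwise automatically satisfiable by choosing $k=(c-m)\bmod b'$. Hence $|F_{T,i}|=b_{\max}-\max_t\max(c^{(t)}_i,m^{(t)}_i)$.

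The goal is then $\log_{b_{\max}-1}|\mathcal{F}_T|\ge L-\alpha T$. I would argue that only positions where some observed digit is large lose a substantial factor. The loss factor $\log\bigl((b_{\max}-1)/|F_{T,i}|\bigr)$ is at most $\log(b_{\max}-1)$ per position. Bounding the total loss across positions by a quantity proportional to $T$ with a constant $\alpha$ depending on $b_{\max}$ yields $|\mathcal{F}_T|\ge(b_{\max}-1)^{L-\alpha T}$, and hence the $\Omega$ bound after subtracting $1$.

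\textbf{Main obstacle.} The hard step is this last accounting. Taken literally, each pair constrains all $L$ positions, not $O(1)$ of them. So the product estimate gives at best $\prod_i(b_{\max}-\max_t c^{(t)}_i)$, which is exponential in $L$ with a base smaller than $b_{\max}-1$, rather than $(b_{\max}-1)^{L-\alpha T}$. I would first try to make the regime hypothesis explicit: assume that the known plaintexts and ciphertexts have bounded digit magnitude, or that the total number of ``informative'' positions is $O(T)$. Under that hypothesis the per-position loss is charged to $T$. Failing that, I would state the bound as $(b_{\max}-1-\bar c)^{L}$, where $\bar c$ is the largest observed digit, and note that it is still exponential in $L$. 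This suffices for the qualitative claim that the query lower bound remains exponential whenever $T$ is small.
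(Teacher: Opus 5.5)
Your argument for the $|\mathcal{F}_T|-1$ bound is the paper's argument: a negative answer removes only the queried candidate, so the oracle can say ``no'' until one candidate is left. You write it more carefully. In particular, you check that every $B'\in\mathcal{F}_T$ is a realizable value of $B$, which is what makes the worst case legitimate and which the paper leaves implicit.

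On the asymptotic clause, the obstacle you found is genuine, and no argument can remove it, because the clause is false as stated. Take $b_{\max}=3$ and $T=1$, so $|\mathcal{F}_1|=\prod_i\bigl(3-\max\{1,c_i,m_i\}\bigr)$ with each factor equal to $2$ or $1$. A factor equals $1$ whenever $b_i=3$ and $c_i=2$. By Proposition~\ref{prop:ciphertext-indep} these events are independent across positions, each with probability $1/6$, so with high probability $|\mathcal{F}_1|\le 2^{11L/12}$. This instance lies in the stated regime: the positions with $b_i=2$ alone force $|\mathcal{F}_1|\ge 2^{L/3}$ with high probability. Yet $2^{11L/12}$ is not $\Omega(2^{L-\alpha})$ for any constant $\alpha$.

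The paper's own justification is that each pair eliminates at most a constant fraction of the base space at each position. This does not get around the problem, since a constant-fraction loss at each of $L$ positions costs $\Theta(L)$ in the exponent, not $O(T)$. What is provable is the exact product formula for $|\mathcal{F}_T|$, whose exponent deficit is linear in $L$.

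Your fallback is too weak, though. The hypotheses you propose either assume the conclusion ($O(T)$ informative positions) or fail for generic data (bounded digits). With a global maximum $\bar c$, the bound $(b_{\max}-1-\bar c)^L$ is vacuous for large $L$. Some position almost surely has $b_i=b_{\max}$ and an observed digit equal to $b_{\max}-1$, which forces $\bar c=b_{\max}-1$.

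The clean repair is monotonicity in the true bases. Every $B'$ with $b'_i\ge b_i$ for all $i$ is consistent with every pair: take $k=(c-m)\bmod b'_i$, which is legitimate because $c,m<b_i\le b'_i$. Hence $|\mathcal{F}_T|\ge\prod_i(b_{\max}-b_i+1)\ge 2^{|\{i:\,b_i<b_{\max}\}|}$ for every $T$. For uniform $B$ and $b_{\max}\ge 3$ this is exponential in $L$ with high probability. That gives the qualitative exponential query lower bound you are after, uniformly in $T$.
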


\begin{proof}
Each query to a candidate $B'$ eliminates at most one element from
$\mathcal{F}_T$: a positive answer identifies $B$ immediately, and a
negative answer eliminates only the queried candidate, so a deterministic
adversary that queries one element at a time requires at least
$|\mathcal{F}_T| - 1$ queries in the worst case, since the adversary
might receive negative answers for all but the last candidate.
The asymptotic bound follows from Proposition~\ref{prop:residual-uncertainty},
which shows that each of the $T$ pairs eliminates at most a constant
fraction of the base space at each position.
\end{proof}

\begin{remark}[Scope of the query lower bound and its relation to computational hardness]
\label{rem:query-scope}
Theorem~\ref{thm:query-lower-bound} establishes a lower bound in the
query model, which captures adversaries that work by testing candidate
base sequences one at a time and receiving a binary answer from the
known-plaintext oracle.
An adversary that uses the algebraic structure of the known pairs to
narrow the search by computation on those pairs falls outside this model,
and the theorem is silent on such adversaries.

This distinction marks the boundary between what the query lower bound
proves and what Open Problem~\ref{op:brp-hardness} asks.
The query lower bound covers testing-based adversaries, meaning any
algorithm whose only tool is membership queries to the feasible set
$\mathcal{F}_T$, and it shows that at least $|\mathcal{F}_T| - 1$ such
queries are required.
Open Problem~\ref{op:brp-hardness} asks about algebraic adversaries,
meaning whether any probabilistic polynomial-time algorithm, including
one that computes arbitrary functions of the known pairs
$(m_i^{(t)}, c_i^{(t)})$, can identify $B$ with non-negligible probability.
The two results are therefore complementary and address disjoint
classes of adversary, and a negative answer to
Open Problem~\ref{op:brp-hardness} would extend the lower bound from
the testing model to the full computational model, closing the gap
between the two classes.

What both share, and what Theorem~\ref{thm:invariance} formalizes for
all adversary classes at once, is that any algorithm, whether testing-based,
algebraic, or unbounded, that recovers $B$ from the available data obtains
zero information about $M$ as a consequence of that recovery, so the two
layers of adversary coverage are independent of each other and both
independent of the information-theoretic guarantee on $M$.
\end{remark}

\subsection{Conjectured hardness and its consequences}

\begin{openprob}[Hardness of BRP]
\label{op:brp-hardness}
Let $L$, $b_{\max}$, $T$, and $\delta > 0$ be parameters.
Is there a probabilistic polynomial-time algorithm that, given $T$
known-plaintext pairs produced by the MR-OTP with a base sequence
drawn uniformly from $\{2,\dots,b_{\max}\}^L$ and fresh uniform keys,
outputs the correct base sequence with probability greater than
$1/|\mathcal{F}_T|$ plus $\delta$, for any non-negligible $\delta$
and any polynomial $T = T(L)$?
\end{openprob}

When the answer to Open Problem~\ref{op:brp-hardness} is negative, the
BRP is hard in the known-plaintext setting and the
following consequence holds.

\begin{proposition}[Conditional two-layer security]
\label{prop:twolayer}
Assume the Base Recovery Problem is hard in the sense of
Open Problem~\ref{op:brp-hardness}.
Then the MR-OTP with a secret base sequence has a two-layer security
structure with the following properties.
\begin{enumerate}[label=(\roman*)]
  \item An adversary with unbounded computational resources and access
        to a polynomial number of known-plaintext pairs can recover $B$ by
        exhaustive search over $\mathcal{F}_T$, and after recovering $B$
        still obtains zero information about any future message $M$
        encrypted under a fresh key, because the MR-OTP pad provides
        perfect secrecy for each fresh encryption independently of any
        prior observations.
  \item An adversary bounded in computation who cannot solve the BRP
        in the known-plaintext setting cannot recover $M$ from a
        ciphertext encrypted under a fresh key even after observing
        a polynomial number of prior pairs, because the base sequence
        remains hidden from any computation within those bounds.
  \item The two layers are independent in the sense that the
        information-theoretic guarantee of layer (i) holds
        independently of whether the computational guarantee of layer
        (ii) holds, and the breach of layer (ii) leaves layer (i) intact.
\end{enumerate}
\end{proposition}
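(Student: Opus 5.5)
The plan is to prove the three items of Proposition~\ref{prop:twolayer} separately. Item (i) uses only the multi-session independence of Theorem~\ref{thm:multisession}. Item (ii) is a reduction to the hardness assumption of Open Problem~\ref{op:brp-hardness}. Item (iii) follows because the argument for (i) never uses the hardness assumption.

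For (i), I would first show that exhaustive search recovers $B$. The true $B$ always lies in $\mathcal{F}_T$, since it is consistent with every observed pair. By Proposition~\ref{prop:residual-uncertainty}, $\mathcal{F}_T$ is computable from the data by checking every $B' \in \{2,\dots,b_{\max}\}^L$ against all $T$ constraints. An unbounded adversary can therefore enumerate it and output a candidate. I do not claim that this candidate equals $B$ with certainty; the claim is only that recovery is possible whenever $|\mathcal{F}_T|=1$, and that in every case the adversary's entire view is available.

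The substantive part of (i) is the "zero information" claim. Let the view be $V=(B,(M^{(t)},C^{(t)})_{t\le T})$ and let $K$ be the fresh key. By Definition~\ref{def:partition-rolling} and Assumption~\ref{ass:qkd}, $K$ is produced from a block of bits disjoint from everything that determines $V$. Hence $K$ is independent of $(V,M)$, and it is uniform on $\keyspace_B$ given $B$ by Theorem~\ref{thm:uniformity}. Conditioning on $V$ (so $B$ is fixed), the setting is a single-session MR-OTP with a uniform key. Applying Theorem~1 of~\cite{buono2026a}, exactly as in Theorem~\ref{thm:e2e}, gives
\[
\Pr[M=m \mid C=c, V] = \Pr[M=m \mid V].
\]
This holds for arbitrary computational power, because it is an identity between distributions and not a statement about algorithms. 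If $M$ is chosen independently of past sessions, the right-hand side equals $\Pr[M=m]$.

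For (ii), I would argue by contraposition. Suppose a PPT adversary $\mathcal{A}$ recovers $M$ from a fresh ciphertext with non-negligible advantage. By (i), any such advantage over the prior cannot come from $C$ given $B$. This exposes the main obstacle: taken literally, (i) shows that even knowing $B$ gives no advantage on $M$, so "cannot recover $M$" holds unconditionally and the hardness assumption is not actually needed. My plan is to state this explicitly and prove (ii) in that strengthened form, deriving it directly from the (i) computation. Then I would note where the hardness assumption genuinely enters: it keeps $B$ itself hidden from bounded adversaries, i.e.\ no PPT algorithm outputs $B$ with probability exceeding $1/|\mathcal{F}_T|+\delta$. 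That is the meaningful reading of layer (ii), and it follows verbatim from the negative answer to Open Problem~\ref{op:brp-hardness}.

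For (iii), the derivation of (i) invokes only Assumption~\ref{ass:qkd}, disjointness, and Theorem~1 of~\cite{buono2026a}, and never the BRP hardness assumption. Hence (i) holds whether or not (ii) holds, and a breach of (ii), meaning an adversary that learns $B$, is already covered by conditioning on $V \ni B$ in (i). I would present this as the direct precursor of Theorem~\ref{thm:invariance}.
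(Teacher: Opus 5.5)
Your proof is correct. For (i) and (iii) it is the paper's argument: fix $B$, use that the fresh key is uniform on $\keyspace_B$ and independent of everything observed, apply Theorem~1 of~\cite{buono2026a}, and note that no hardness assumption was used. You are more careful than the paper in two respects. First, you condition on the whole view $V$ rather than on $B$ alone, which is what ``independently of any prior observations'' requires. Second, you decline to claim that exhaustive search pins down $B$, a step the paper's proof never justifies. That hedge is warranted: the residue condition of Proposition~\ref{prop:residual-uncertainty} is vacuous (take $k=(c^{(t)}_i-m^{(t)}_i)\bmod b'_i$), so $\mathcal{F}_T$ is just the upward-closed box $b'_i>\max_t\max(c^{(t)}_i,m^{(t)}_i)$, and $|\mathcal{F}_T|=1$ only in degenerate cases. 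The real divergence is (ii). The paper derives it from the hardness assumption, arguing that a bounded adversary cannot compute $m_i=(c_i-k_i)\bmod b_i$ because $b_i$ is unknown. But $k_i$ is a fresh digit unknown to every adversary, so this credits the protection of $M$ to the wrong secret. You instead obtain the conclusion of (ii) unconditionally from (i) and confine the hardness hypothesis to the separate claim that $B$ stays hidden. Your route proves a stronger statement, makes (iii) immediate, and stays meaningful even if the hypothesis fails. That failure is a real possibility: the coordinatewise minimum of $\mathcal{F}_T$ is computable in polynomial time and equals $B$ with probability at least $\prod_i(1-(1-1/b_i)^T)$. The paper's route buys only the narrative that both layers guard $M$; yours shows the computational layer guards $B$ alone. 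Three small repairs are needed. State the independence of $K$ conditionally on $B$, since the range of $K$ depends on $B$. Cite the proposition's own fresh-key hypothesis rather than Definition~\ref{def:partition-rolling}, which draws a new base per session, whereas the BRP reuses one $B$. Assume explicitly that $M$ is independent of $B$ as well as of the past pairs. You need this both for $\Pr[M=m\mid V]=\Pr[M=m]$ and for averaging down to a view without $B$ in your strengthened (ii); otherwise $C$ leaks $B$ and $B$ can leak $M$. The paper makes the same assumption tacitly by writing $\Pr[M=m']$ inside the $B$-conditioned computation of Theorem~\ref{thm:invariance}.
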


\begin{proof}
Property (i) follows from Theorem~1 of~\cite{buono2026a} and
Theorem~\ref{thm:e2e}: for any fixed $B$, the distribution of a
fresh ciphertext $C$ given a fresh message $M$ is uniform over the
ciphertext space and independent of $M$, so recovering $B$ yields
zero information about any future $M$ encrypted under a key that is
independent of all observed pairs.
Property (ii) follows from the definition of computational hardness,
since an adversary bounded in computation who cannot output $B$ cannot
compute $m_i = (c_i - k_i) \bmod b_i$ because $b_i$ is unknown.
Property (iii) follows because the proof of (i) is purely information-theoretic
and holds independently of whether layer (ii) is intact.
\end{proof}

\begin{remark}[Two kinds of secret and their independence]
\label{rem:two-secrets}
The MR-OTP with a secret base sequence involves two distinct kinds of
secret, and understanding their separation is essential for correct
security analysis.

The base sequence $B$ is a computational secret in the known-plaintext
setting, where its secrecy imposes a search problem on any adversary
with search space of cardinality $|\mathcal{F}_T|$, which remains
exponential in $L$ for small $T$, and the lower bound of
Theorem~\ref{thm:query-lower-bound} establishes that testing-based
search requires a number of operations exponential in $L$, while
Open Problem~\ref{op:brp-hardness} asks whether algebraic search can
do better.
In the ciphertext-only setting, $B$ is hidden by
Corollary~\ref{cor:co-insoluble} for information-theoretic reasons,
which is a stronger protection.

The pad $K$ is an information-theoretic secret whose secrecy is
guaranteed by the structure of the MR-OTP encryption map, holding
independently of the computational resources of the adversary.
Theorem~\ref{thm:invariance} establishes that knowing $B$ gives zero
information about any message encrypted under a fresh key, since the
posterior distribution of $M$ given $(C, B)$ for a fresh encryption
equals the prior distribution of $M$.

These two secrets are independent in a precise sense, because the
encryption map that protects $M$ requires knowing both $B$ and $K$,
and $K$ is protected by the uniform distribution of the pad in the
information-theoretic sense, so gaining $B$ from prior pairs leaves
the pad of any fresh encryption with its full guarantee intact, and
a system designer can reason about the two layers with complete
independence, where the computational layer controls the cost of
identifying $B$ from known pairs and the information-theoretic layer
guarantees that each fresh message remains protected once $B$ is known.
\end{remark}

\begin{theorem}[Invariance of information-theoretic security under
  computational layer failure]
\label{thm:invariance}
Let $\mathcal{A}$ be any algorithm, bounded in computation or unbounded,
that recovers the base sequence $B$ from $T$ known-plaintext pairs.
Even given $B$ as additional input, any algorithm attempting to determine
a future message $M$ from its ciphertext $C$, where $C$ is produced
under a fresh key independent of all observed pairs, succeeds with
probability at most $\Pr[M=m]$ for any $m$.
For every message distribution $\Pr[M]$, every base sequence $B$,
and every ciphertext $c$ with $\Pr[C=c] > 0$ under a fresh key,
\[
  \Pr[M=m \mid C=c,\, B=b] \;=\; \Pr[M=m] \qquad \text{for all } m.
\]
\end{theorem}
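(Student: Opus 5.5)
The plan is to reduce the statement to the single-session perfect secrecy argument, now conditioned on the event $B=b$. The observed pairs and the output of $\mathcal{A}$ are handled only through one independence fact: the fresh key is independent of everything else.

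First, I set up the joint probability space explicitly. It contains the base sequence $B$, the future message $M$, the fresh key $K$, and the observed data $D=((M^{(t)},C^{(t)}))_{t\le T}$ with their keys $K^{(1)},\dots,K^{(T)}$. The modelling assumptions I will state are:
\begin{enumerate}[label=(\alph*)]
  \item $M$ is independent of $B$, so that $\Pr[M=m\mid B=b]=\Pr[M=m]$; this is implicit in ``every message distribution $\Pr[M]$''.
  \item Conditioned on $B=b$, the key $K$ is uniform on $\keyspace_b$ and independent of $(M,D)$. This is exactly the ``fresh key'' hypothesis, and it is realized by the partition scheme of Definition~\ref{def:partition-rolling} via Theorem~\ref{thm:multisession} and Proposition~\ref{prop:stage2}.
\end{enumerate}
Since the output of $\mathcal{A}$ is a (possibly randomized) function of $D$ using coins independent of $K$, conditioning on whatever $\mathcal{A}$ produces is subsumed by conditioning on $B=b$ together with $D$. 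It therefore suffices to prove the displayed identity, and in fact the stronger $\Pr[M=m\mid C=c,B=b,D=d]=\Pr[M=m\mid B=b,D=d]$.

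Second, I compute the likelihood. Fix $b$ and $m$ with each $m_i<b_i$. The map $k\mapsto c$ with $c_i=(m_i+k_i)\bmod b_i$ is a bijection of $\keyspace_b$, so
\[
\Pr[C=c\mid M=m,B=b]=\Pr[K=c-m\mid B=b]=\prod_i \frac{1}{b_i},
\]
which is Proposition~\ref{prop:ciphertext-indep}. The likelihood is constant in $m$, so $\Pr[C=c\mid B=b]=\prod_i 1/b_i$ as well. Bayes' rule then gives
\[
\Pr[M=m\mid C=c,B=b]=\frac{\Pr[M=m\mid B=b]\prod_i b_i^{-1}}{\prod_i b_i^{-1}}=\Pr[M=m\mid B=b],
\]
and assumption (a) turns the right-hand side into $\Pr[M=m]$. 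The same computation, conditioned additionally on $D=d$, uses assumption (b) to keep the likelihood equal to $\prod_i 1/b_i$.

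The success-probability claim follows directly. Any guess $g(c,b)$, even one that also depends on $D$, succeeds with probability
\[
\sum_{c}\Pr[C=c]\,\Pr[M=g\mid C=c,B=b]\le\max_m\Pr[M=m].
\]
I would interpret the statement's ``at most $\Pr[M=m]$'' in this sense: the probability of guessing a particular $m$ correctly never exceeds its prior.

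The main obstacle is the independence bookkeeping in the first step, not the calculation. One has to justify that conditioning on the adversary's recovered $B$ and on the known-plaintext history leaves $K$ uniform on $\keyspace_b$. The history is correlated with $B$ through the residue constraints of Proposition~\ref{prop:residual-uncertainty}, so the argument must use that $K$ depends on $B$ only through its range $\keyspace_B$ and on nothing else. I would handle this by citing the disjoint-block construction in the proof of Theorem~\ref{thm:multisession}: $K$ is generated from bits disjoint from those producing $B$ and the $K^{(t)}$, and is then mapped into $\keyspace_B$ by Algorithm~\ref{alg:sample}, whose output is uniform for every fixed value of $B$.
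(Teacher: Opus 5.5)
Your proposal is correct and follows the paper's proof. You fix $B=b$, use Proposition~\ref{prop:ciphertext-indep} to obtain the $m$-independent likelihood $\prod_i 1/b_i$, and apply Bayes' rule; your assumption (a) makes explicit the independence of $M$ and $B$ that the paper uses silently when it writes $\Pr[M=m]$ where $\Pr[M=m\mid B=b]$ is meant.

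One small caution concerns your parenthetical claim that a guess depending also on $D$ still succeeds with probability at most $\max_m\Pr[M=m]$. That bound additionally requires $M$ to be independent of the known plaintexts contained in $D$. Without that assumption, the correct statement is your conditional identity: the ciphertext adds nothing beyond $\Pr[M=m\mid B=b,D=d]$.
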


\begin{proof}
Fix $B = b$ and apply Bayes' rule, noting that $C$ is produced under a
fresh key independent of all prior observations.
By Proposition~\ref{prop:ciphertext-indep}, $\Pr[C=c \mid M=m, B=b]
= \prod_i 1/b_i$ for every $m$ with $m_i < b_i$.
The marginal $\Pr[C=c \mid B=b] = \sum_{m'} \Pr[C=c \mid M=m',B=b]
\Pr[M=m'] = (\prod_i 1/b_i) \sum_{m'} \Pr[M=m'] = \prod_i 1/b_i$.
Therefore
\[
  \Pr[M=m \mid C=c, B=b]
  = \frac{\Pr[C=c \mid M=m,B=b]\,\Pr[M=m]}{\Pr[C=c \mid B=b]}
  = \frac{(\prod_i 1/b_i)\,\Pr[M=m]}{\prod_i 1/b_i}
  = \Pr[M=m].
\]
This holds for every fixed $b$, and therefore holds independently of
how $B$ was obtained by $\mathcal{A}$.
\end{proof}

\begin{remark}[Complete adversary taxonomy]
\label{rem:taxonomy}
The results of this section partition the space of possible adversaries
into classes and establish what each class achieves and what remains out of reach,
as summarized in the following table.

\medskip
\begin{center}
\renewcommand{\arraystretch}{1.4}
\begin{tabular}{p{4.2cm}p{4.2cm}p{4.2cm}}
\toprule
\textbf{Adversary class} & \textbf{What it can do} &
  \textbf{What lies out of reach} \\
\midrule
Testing-based adversary (any computational bound, KP setting) &
  Eliminate elements of $\mathcal{F}_T$ one query at a time &
  Identify $B$ in fewer than $|\mathcal{F}_T|-1$ queries
  (Thm.~\ref{thm:query-lower-bound}, unconditional) \\
Algebraic adversary (PPT, KP setting) &
  Compute arbitrary functions of the $T$ known pairs &
  Identify $B$ with probability above $1/|\mathcal{F}_T|$
  (Open Prob.~\ref{op:brp-hardness}, conditional) \\
Unbounded adversary (KP setting) &
  Recover $B$ from $T$ known pairs by exhaustive search over $\mathcal{F}_T$ &
  Obtain any information about a future $M$ encrypted under a fresh key
  (Thm.~\ref{thm:invariance}, unconditional) \\
Ciphertext-only adversary (any computational bound) &
  Constrain $b_i > c_i$ and compute the posterior over $\mathcal{F}(C)$ &
  Identify $B$ with probability above the MAP estimate over $\mathcal{F}(C)$
  (Cor.~\ref{cor:co-insoluble}, information-theoretic) \\
\bottomrule
\end{tabular}
\end{center}
\medskip

The table admits four structural observations, the most consequential being that the information-theoretic guarantee on future messages $M$ is the only
result that covers all adversary classes unconditionally, as Theorem~\ref{thm:invariance} holds for every
adversary in every row.
The query lower bound and computational hardness address disjoint adversary
classes and different algorithmic strategies for recovering $B$ in the
known-plaintext setting, where a testing-based adversary is constrained by
Theorem~\ref{thm:query-lower-bound} independently of its computational power,
an algebraic adversary with polynomial-time resources is constrained by
Open Problem~\ref{op:brp-hardness} when the answer is negative, and together
the two results cover the space of adversaries trying to identify $B$ from
known pairs.
The ciphertext-only adversary occupies a structurally different position
from all others, facing an information-theoretic obstacle rather than a
computational one, since the posterior over $\mathcal{F}(C)$ is spread
across an exponential number of candidates and this is a stronger protection that is unconditional.
The work of recovering $B$ by any means, at any cost, and by any strategy,
from any amount of known-plaintext data, provides zero advantage in
recovering any future message encrypted under a fresh key, as established
by Theorem~\ref{thm:invariance} for every row of the table and every outcome
of the adversary's attempt to recover $B$.
\end{remark}

\begin{corollary}[Perfect secrecy with restricted message support]
\label{cor:restricted-support}
Let $S \subseteq \mathcal{M}$ be any subset of the message space, and
let $\Pr[M]$ be any distribution with support contained in $S$, so that
$\Pr[M=m] = 0$ for every $m \notin S$.
The MR-OTP with a key $K$ uniform on $\keyspace_B$ achieves perfect
secrecy for this distribution, so that for every base sequence $B$, every
ciphertext $c$ with $\Pr[C=c]>0$, and every $m$,
\[
  \Pr[M=m \mid C=c,\, B=b] \;=\; \Pr[M=m].
\]
When $|S| < P = \prod_i b_i$, the key carries more entropy than the
message requires, and the excess is $\log_2 P - \log_2 |S|$
bits per message.
The minimum key entropy consistent with perfect secrecy for the
distribution $\Pr[M]$ concentrated on $S$ is $\Hent(M)$, and
the MR-OTP with base sequence $B$ achieves this minimum when
$\prod_i b_i = |S|$, the key $K$ is uniform on $\keyspace_B$,
and $M$ is uniform on $S$, so that both the key and the message
are drawn uniformly from spaces of equal cardinality.
\end{corollary}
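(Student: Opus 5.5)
The plan has three parts, taken in the order of the statement: the secrecy identity, the excess-entropy count, and the minimality claim. The secrecy identity is obtained by rerunning the Bayes computation from the proof of Theorem~\ref{thm:invariance} without ever dividing by a probability that can vanish. The only place the support restriction could matter is the marginal $\Pr[C=c \mid B=b]$.

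\textbf{Secrecy identity.} Fix $B=b$. By Proposition~\ref{prop:ciphertext-indep}, for every $m \in \mathcal{M}$ (each $m_i < b_i$) the likelihood $\Pr[C=c \mid M=m, B=b] = \prod_i 1/b_i$, independently of $m$. The marginal is
\[
\Pr[C=c \mid B=b] = \sum_{m' \in S} \Pr[C=c \mid M=m',B=b]\,\Pr[M=m'] = \prod_i 1/b_i .
\]
Since the probabilities of messages in $S$ sum to $1$, this marginal is strictly positive, so conditioning on $C=c$ is well defined. The argument then splits into two cases.
\begin{itemize}
  \item For $m \notin S$, the numerator $\Pr[C=c \mid M=m,B=b]\Pr[M=m]$ is $0$, so the posterior is $0 = \Pr[M=m]$.
  \item For $m \in S$, the same cancellation as in Theorem~\ref{thm:invariance} gives $\Pr[M=m]$.
\end{itemize}
Equivalently, $C$ is independent of $M$ given $B=b$ for \emph{every} input distribution. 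Restricting the support can therefore only remove terms from the sum; it never changes the likelihood.

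\textbf{Excess-entropy count.} This part is an entropy bookkeeping step. By Theorem~\ref{thm:uniformity}, $H(K) = \log_2 P$. For any $M$ supported on $S$, Gibbs' inequality gives $H(M) \le \log_2 |S|$. The excess $\log_2 P - \log_2|S|$ is then read as the gap between the key entropy and the largest entropy a message on $S$ can carry. I will state explicitly that this interpretation is with respect to the worst case (uniform) distribution on $S$.

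\textbf{Minimality.} First, I invoke Shannon's bound~\cite{shannon1949}, in the form used as Proposition~2 of~\cite{buono2026a}: perfect secrecy forces $H(K) \ge H(M)$. Suppose the three stated conditions hold: $\prod_i b_i = |S|$, $K$ uniform on $\keyspace_B$, and $M$ uniform on $S$. Then $H(K) = \log_2 P = \log_2 |S| = H(M)$, so equality is attained. For perfect secrecy itself, the first part of the argument applies.

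\textbf{Main obstacle.} I expect the minimality claim to be the delicate step. The Shannon inequality $H(K) \ge H(M)$ is standard only when the ciphertext determines the message once the key is known, that is, when decryption is correct. So I will check that MR-OTP decryption $m_i = (c_i - k_i) \bmod b_i$ is injective in $m$ for fixed $K$, which gives $H(M \mid C,K) = 0$. With that in hand, the standard chain
\[
H(M) = H(M\mid C) \le H(M,K \mid C) = H(K\mid C) \le H(K)
\]
completes the bound.

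A further point needs care. When $M$ is not uniform on $S$, $H(M) < \log_2|S|$, and an MR-OTP key uniform on $\keyspace_B$ with $P=|S|$ does not achieve $H(M)$. I would state this explicitly to justify why the uniformity of $M$ appears among the hypotheses, rather than claim the minimum is reached in general.
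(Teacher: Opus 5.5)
Your proposal is correct and follows the same route as the paper. You rerun the Bayes cancellation of Theorem~\ref{thm:invariance}, observe that restricting the support only removes terms from the marginal and gives zero posterior outside $S$, and then invoke Shannon's bound $\Hent(K)\ge\Hent(M)$, with equality when $\prod_i b_i=|S|$ and both $K$ and $M$ are uniform.

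Your extra care fills in details the paper's two-line proof leaves implicit: the positivity of the marginal, the explicit chain $\Hent(M)=\Hent(M\mid C)\le\Hent(K\mid C)\le\Hent(K)$ from correct decryption, and the caveat that $\Hent(M)$ is only a lower bound, not attained for non-uniform $M$. In fact, any perfectly secret cipher has $\Hent(K)\ge\log_2|\mathrm{supp}\,M|$, which is also the cleanest way to read the excess $\log_2 P-\log_2|S|$.
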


\begin{proof}
The proof of Theorem~\ref{thm:invariance} makes no assumption on the
support of $\Pr[M]$, and the computation
$\Pr[M=m \mid C=c, B=b] = \Pr[M=m]$
holds for every $m$ with $\Pr[C=c \mid B=b] > 0$, which covers all $m$
in the support of $\Pr[M]$ and assigns zero posterior to all $m$ outside
the support, consistent with the prior.
The excess entropy claim follows from Shannon's bound, which requires
$\Hent(K) \ge \Hent(M)$ for any cipher achieving perfect secrecy, with
equality when the key space and message space have the same cardinality
and both are used uniformly.
\end{proof}

\begin{remark}[Connection to optimal base selection]
\label{rem:optimal-base}
Corollary~\ref{cor:restricted-support} connects the invariance
theorem to the optimal base selection problem (Open
Problem~1 of Section~\ref{sec:open}).
When the message distribution has support $S$ with $|S| < P$, the
current base sequence $B$ wastes $\log_2 P - \log_2 |S|$ bits of key
per message, and the problem of finding the base sequence $B^*$ that
minimizes this waste while maintaining perfect secrecy reduces to finding
$B^*$ such that $\prod_i b^*_i$ is as close as possible to
$|S|$ from above, subject to each $b^*_i \ge 2$.
When the support $S$ has product structure, meaning
$S = \prod_i S_i$ with $|S_i|$ symbols per position, the optimal
base sequence sets $b^*_i = |S_i|$ and wastes no key material,
which is the native encoding principle of the MR-OTP, where matching the
base at each position to the alphabet size at that position eliminates
the binary overhead of Section~\ref{sec:efficiency} and also the
support overhead identified here, achieving the Shannon bound.
\end{remark}

\begin{remark}[Significance of Theorem~\ref{thm:invariance}]
Theorem~\ref{thm:invariance} requires no assumption on the computational
hardness of the BRP, and establishes that the information-theoretic
security of the MR-OTP is preserved under any future algorithmic advance,
including the discovery of a polynomial-time algorithm for the known-plaintext
BRP, a quantum algorithm that speeds up base recovery beyond Grover's bound,
or any other computational development.
An adversary who gains full knowledge of $B$ by any means and at any
computational cost holds a fresh ciphertext that is uniform over the
ciphertext space and independent of the message, because the fresh key
is independent of all prior observations, and this property distinguishes
the MR-OTP from systems whose security rests on computational assumptions,
where recovering a secret key yields the plaintext.
\end{remark}

\begin{remark}[Structure of the proof and the Syntactic Invariance Principle]
\label{rem:sip}
The proof of Theorem~\ref{thm:invariance} has the same logical structure
as the Syntactic Invariance Principle of~\cite{buono2026b}, instantiated
in the probabilistic setting of information theory rather than in the
syntactic calculus of superposition.

The invariant property $P$ is the independence of any future $M$ from
the observable $(C, B)$, expressed as
$\Pr[M=m \mid C=c, B=b] = \Pr[M=m]$ for all $m, c, b$, where $C$ is
produced under a fresh key.
Step~1 of the structure establishes that $P$ holds for every fresh
encryption: by Proposition~\ref{prop:ciphertext-indep}, the MR-OTP
encryption map produces a ciphertext $C$ whose distribution given $B$
is $\prod_i 1/b_i$, independent of $M$.
Step~2 establishes that $P$ is preserved by every operation an
adversary can perform on the observable $(C, B)$, since $C$ is the
result of a function of the fresh $K$ alone (with $M$ entering only
through the sum $m_i + k_i \bmod b_i$, and $K$ being independent of all
prior observations), so further processing of $C$ and $B$ introduces
zero new information about $M$.
Step~3 concludes that $M$ is unreachable from $(C, B)$ by any sequence
of operations of any length on any set of observations.

The correspondence with the proof of Lemma~2 of~\cite{buono2026b} is
the following.
The Skolem constants $a$ and $b$ of that proof, which are opaque to the
rewriting rules and remain frozen across every derivation step, correspond
to the message digits $m_i$, which are opaque to any computation on
$(C, B)$ for a fresh encryption.
The frozen subterms $a{+}b$ and $b{+}a$, whose relative order is a
global invariant of the entire derivation, correspond to the message $M$,
whose distribution is a global invariant of the entire fresh encryption
scheme.
The rewriting system having no access to the numerical
relationship between $a$ and $b$ corresponds to any
algorithm operating on $(C, B)$ having no access to the value of a
freshly encrypted $M$.

The structural difference between the two settings is worth stating
with precision.
In~\cite{buono2026b}, the invariant holds per-derivation, meaning every
term in every derivable clause satisfies the syntactic property $P$,
and this is an absolute statement about individual terms.
In the present paper, the invariant is distributional, meaning the
property $P$ holds for the probability distribution of $M$ given the
observable, and this is a statement about the measure induced by the
uniform key, averaged over fresh encryptions.
The two settings therefore share the same logical architecture while
operating at different levels of description, where the syntactic level
in~\cite{buono2026b} corresponds to the distributional level here, and
the invariance in both cases is what makes the target unreachable by any
process operating at the level of the observable.
\end{remark}

\begin{remark}[The temporal security layer and its unconditional value]
\label{rem:temporal}
The combination of Theorem~\ref{thm:query-lower-bound} and
Theorem~\ref{thm:invariance} yields a security architecture whose value is unconditional.
Theorem~\ref{thm:query-lower-bound} establishes that any algorithm
working by testing candidate base sequences requires at least
$|\mathcal{F}_T| - 1$ queries before it can identify $B$, with
$|\mathcal{F}_T|$ remaining exponential in $L$ for small $T$, since
the known-plaintext pairs reduce the feasible base set from
$(b_{\max}-1)^L$ toward a smaller set as $T$ grows, but for polynomial
$T$ the reduction is at most polynomial in the number of bits of
constraint, leaving the search space exponential.
Theorem~\ref{thm:invariance} establishes that once the search is
completed and $B$ is known, the adversary gains zero information about
any future message encrypted under a fresh key, so the search consumes
computational resources with zero cryptanalytic return on future messages,
and the entire work invested in recovering $B$ is irrelevant to the
security of any fresh encryption.
\end{remark}

\begin{corollary}[Unconditional temporal security]
\label{cor:temporal}
Let $W$ be the work performed by any adversary to recover $B$ from $T$
known-plaintext pairs, measured in the number of candidate base sequences
tested.
By Theorem~\ref{thm:query-lower-bound}, $W \ge |\mathcal{F}_T| - 1$
in the worst case, a quantity that is exponential in $L$ for small $T$,
and by Theorem~\ref{thm:invariance}, independently of the value of $W$
and independently of whether $B$ is recovered, the adversary obtains
zero information about any future message encrypted under a fresh key
from its ciphertext, so the work $W$ is a pure delay on future message
recovery with zero informational return, bounded below by a quantity
exponential in $L$ for small $T$, with the bound holding unconditionally.
\end{corollary}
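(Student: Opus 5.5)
The statement of Corollary~\ref{cor:temporal} combines two results already proved. I would assemble it from them rather than introduce new machinery, treating its two clauses separately: the lower bound on $W$, and the zero-information clause.

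For the first clause I would invoke Theorem~\ref{thm:query-lower-bound} directly. Measuring $W$ as the number of candidate base sequences tested places the adversary in the query model of Definition~\ref{def:query-model}. That definition is where the bound is valid, so I will state explicitly that the lower bound is in that model. Since each negative answer removes only the queried candidate from $\mathcal{F}_T$, an adversary can be forced to make $|\mathcal{F}_T|-1$ queries, and so $W \ge |\mathcal{F}_T|-1$ in the worst case. For the exponential claim I would use Proposition~\ref{prop:residual-uncertainty}. Each pair eliminates at most a constant fraction of candidate bases at each position, so for $T$ small relative to $L$ we still have $|\mathcal{F}_T| = \Omega((b_{\max}-1)^{L-\alpha T})$, which is exponential in $L$. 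This is exactly the second sentence of Theorem~\ref{thm:query-lower-bound}.

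For the second clause I would apply Theorem~\ref{thm:invariance}. For every fixed $b$ and every fresh ciphertext $c$, it gives $\Pr[M=m \mid C=c, B=b]=\Pr[M=m]$. The point is that this identity does not depend on how the adversary's knowledge of $B$ arose. Take the adversary's state after expending work $W$, say a (possibly randomized) function $Z$ of the $T$ known pairs and internal coins. The fresh key is independent of $(Z, B)$, and by Proposition~\ref{prop:ciphertext-indep} the ciphertext $C$ is uniform given $B$, regardless of $M$. Hence $M$ is independent of $(C, B, Z)$ jointly, and so $\Pr[M=m \mid C=c, Z=z]=\Pr[M=m]$ holds for every outcome $z$. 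This covers both the recovered-$B$ and the unrecovered case, and it does not depend on $W$.

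The main obstacle is this independence step. The proof of Theorem~\ref{thm:invariance} conditions only on $B$, whereas here we must also condition on the adversary's full view. I would handle it by averaging: write $\Pr[M=m\mid C=c, Z=z]$ as a sum over $b$ of $\Pr[M=m\mid C=c,B=b,Z=z]\Pr[B=b\mid C=c,Z=z]$. Each term equals $\Pr[M=m]$ because the fresh $K$ and $M$ are independent of the prior pairs. Since the weights $\Pr[B=b\mid C=c,Z=z]$ sum to $1$, the whole sum equals $\Pr[M=m]$. Combining the two clauses, $W$ is a delay bounded below exponentially in $L$ with zero informational return. The bound is unconditional because neither ingredient uses a hardness assumption.
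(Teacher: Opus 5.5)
Your proposal is correct and follows the paper's route: the corollary is obtained by combining Theorem~\ref{thm:query-lower-bound}, read (as you rightly insist) in the query model of Definition~\ref{def:query-model}, with Theorem~\ref{thm:invariance}. Your averaging over $\Pr[B=b\mid C=c,Z=z]$ usefully spells out a step the paper leaves implicit, passing from conditioning on the true $B$ to conditioning on the adversary's whole view; two small fixes are needed: the fresh key is not independent of $B$ (its range is $\keyspace_B$), only uniform on $\keyspace_B$ and independent of $(M,Z)$ given $B$, which is all the averaging needs; and the exponential size of $\mathcal{F}_T$ should rest on the citation itself, since ``each pair removes at most a constant fraction per position'' does not by itself give an exponent of the form $L-\alpha T$.
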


\section{Batched Key Extraction}
\label{sec:batched}

\subsection{Motivation and setup}

Algorithm~\ref{alg:sample} produces one key tuple $(k_1,\dots,k_L)$ per
invocation by drawing $k = \ceil{\log_2 P}$ bits, rejecting if the
resulting integer $V$ falls outside $\{0,\dots,P-1\}$, and then
decomposing $V$ into digits.
For a message of length $N$ over a uniform base sequence $b_i = b$
for all $i$, this requires $N$ independent invocations, each consuming
an expected $k \cdot 2^k/P$ bits.
A batched approach draws a single large integer $U$ from
$\{0,\dots,P^N-1\}$ and decomposes it into all $N$ digits at once,
reducing the rejection overhead from $N$ separate geometric
trials to one.

\subsection{The batched sampler}

\begin{definition}[Batched rejection sampler $\mathcal{S}_N(b)$]
\label{def:batched}
Let $b \ge 2$, $N \ge 1$, $Q = b^N$, and $q = \ceil{\log_2 Q}$.
The batched sampler $\mathcal{S}_N(b)$ proceeds through the following steps.
\begin{enumerate}[label=\arabic*.]
  \item Draw $q$ uniform bits from $\mathcal{R}$ and interpret them as
        an integer $U \in \{0,\dots,2^q-1\}$.
  \item If $U \ge Q$, discard $U$ and return to step~1.
  \item Apply inverse Horner decomposition with base sequence
        $(b,b,\dots,b)$ of length $N$ to $U$, producing
        $(k_1,\dots,k_N)$ with each $k_i \in \{0,\dots,b-1\}$.
  \item Return $(k_1,\dots,k_N)$.
\end{enumerate}
\end{definition}

\begin{theorem}[Batched uniform sampling]
\label{thm:batched}
The batched sampler $\mathcal{S}_N(b)$ terminates with probability $1$
and outputs a tuple $(k_1,\dots,k_N)$ that is uniform on
$\{0,\dots,b-1\}^N$ with mutually independent digits.
\end{theorem}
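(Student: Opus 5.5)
The plan is to reduce Theorem~\ref{thm:batched} to Theorem~\ref{thm:uniformity}. Observe that $\mathcal{S}_N(b)$ is exactly the rejection sampler $\mathcal{S}(B_N)$ of Definition~\ref{def:rejection} applied to the constant base sequence $B_N = (b,b,\dots,b)$ of length $N$. For this sequence the product of the bases is $\prod_{i=1}^N b = b^N = Q$, and the bit count is $k = \ceil{\log_2 Q} = q$. The three steps of Definition~\ref{def:batched} coincide line by line with those of Definition~\ref{def:rejection}: draw $q$ bits, reject if $U \ge Q$, then apply inverse Horner decomposition with respect to $B_N$. The key space $\keyspace_{B_N}$ is $\{0,\dots,b-1\}^N$. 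Once this identification is written out, Theorem~\ref{thm:uniformity} directly gives almost-sure termination, uniformity on $\{0,\dots,b-1\}^N$, and mutual independence of the digits.

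For a self-contained presentation, I would also sketch the three ingredients directly.

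\textbf{Termination.} From $2^{q-1} < Q \le 2^q$ we get an acceptance probability $Q/2^q > 1/2$ per round. Rounds use disjoint blocks of the i.i.d.\ source $\mathcal{R}$, so the number of rounds is geometric and is finite with probability $1$.

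\textbf{Uniformity of the accepted value.} Conditioning on $U < Q$ gives $\Pr[U = u \mid U < Q] = 1/Q$ for every $u \in \{0,\dots,Q-1\}$. It should be stated explicitly that the accepted value is independent of the number of rejected rounds. Each round's block is fresh, so the output distribution is the conditional law of a single round given acceptance.

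\textbf{Uniformity and independence of the digits.} Proposition~\ref{prop:correctness} with $B_N$ shows that inverse Horner is a bijection $\{0,\dots,Q-1\} \to \{0,\dots,b-1\}^N$. Hence each tuple $(t_1,\dots,t_N)$ has probability $1/Q = \prod_{i=1}^N (1/b)$. Summing over the other coordinates shows each marginal is uniform on $\{0,\dots,b-1\}$, so the joint law factorizes into the product of its marginals, which is mutual independence.

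The only step requiring care is the independence of the accepted value from the rejection history, that is, that conditioning on acceptance does not skew the output. I would handle it by writing $\Pr[\text{output}=u] = \sum_{j\ge1} (1-Q/2^q)^{j-1} \cdot 2^{-q}$ and summing the geometric series to obtain $1/Q$. Everything else is inherited verbatim from the proof of Theorem~\ref{thm:uniformity}, so I would keep the written proof short and cite it.
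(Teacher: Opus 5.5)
Your proposal is correct and follows the paper's proof. The paper likewise reruns the argument of Theorem~\ref{thm:uniformity} for the constant base sequence $(b,\dots,b)$: it uses $2^q \le 2Q$ for termination, the conditional-probability computation for uniformity of the accepted $U$, and bijectivity of inverse Horner for joint uniformity and independence of the digits. Your explicit geometric-series check, showing that the output law equals the one-round law conditioned on acceptance, is a detail the paper leaves implicit and a worthwhile addition.
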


\begin{proof}
The acceptance probability per round is $Q/2^q \ge 1/2$ by the same
argument as Theorem~\ref{thm:uniformity}, since $q = \ceil{\log_2 Q}$
implies $2^q \le 2Q$.
Upon acceptance, $U$ is uniform on $\{0,\dots,Q-1\} = \{0,\dots,b^N-1\}$
by the same Bayes argument as Theorem~\ref{thm:uniformity}.
The inverse Horner decomposition with constant base $b$ applied to
$U$ uniform on $\{0,\dots,b^N-1\}$ produces $(k_1,\dots,k_N)$ that is
jointly uniform on $\{0,\dots,b-1\}^N$, since $b^N = |\{0,\dots,b^N-1\}|$
and the map is a bijection.
Independence follows from Theorem~\ref{thm:uniformity} applied to the
base sequence $(b,b,\dots,b)$.
\end{proof}

\subsection{Cost comparison with sequential sampling}

\begin{proposition}[Batched vs.\ sequential bit cost]
\label{prop:batched-cost}
Let $b \ge 2$, $N \ge 1$, $P = b$, $Q = b^N$,
$k = \ceil{\log_2 b}$, and $q = \ceil{\log_2 b^N}$.

The expected bit cost of $N$ sequential invocations of
$\mathcal{S}(b)$ is
\[
  N \cdot k \cdot \frac{2^k}{b}
  \;\le\; 2Nk \;=\; 2N\ceil{\log_2 b}.
\]
The expected bit cost of one invocation of $\mathcal{S}_N(b)$ is
\[
  q \cdot \frac{2^q}{b^N}
  \;\le\; 2q \;=\; 2\ceil{N \log_2 b}.
\]
The batched approach satisfies $q \le Nk$, with equality only when
$b$ is a power of $2$, and when $b$ is not a power of $2$ the batched
approach uses fewer bits with a saving per message of
\[
  \Delta_{\mathrm{batch}}(b,N)
  \;=\; N\ceil{\log_2 b} - \ceil{N\log_2 b}
  \;\ge\; 0,
\]
with equality if and only if $b$ is a power of $2$.
\end{proposition}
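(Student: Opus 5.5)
The plan is to treat the three claims separately: the two expected-cost formulas, the inequality $q \le Nk$, and the characterisation of when $q = Nk$.

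\emph{Expected-cost formulas.} These follow directly from Proposition~\ref{prop:cost}.
\begin{itemize}
\item For $N$ sequential calls to $\mathcal{S}(b)$, take $P=b$. Each call consumes $k\,2^k/b$ bits in expectation. Linearity of expectation gives $N k\,2^k/b$. The bound $2^k \le 2b$ then gives at most $2Nk$.
\item For the batched sampler $\mathcal{S}_N(b)$, apply the same geometric-rounds argument with $P$ replaced by $Q=b^N$ and $k$ replaced by $q$. This is valid because Theorem~\ref{thm:batched} already gives acceptance probability $Q/2^q \ge 1/2$. The result is $q\,2^q/b^N \le 2q$.
\item Finally, note that $q=\ceil{\log_2 b^N}=\ceil{N\log_2 b}$.
\end{itemize}

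\emph{The inequality $q \le Nk$.} Since $\log_2 b \le k$, we have $N\log_2 b \le Nk$. Because $Nk$ is an integer, taking ceilings preserves this, so $\ceil{N\log_2 b} \le Nk$. Hence $\Delta_{\mathrm{batch}}(b,N)=Nk-q\ge 0$.

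\emph{If $b$ is a power of $2$.} Then $\log_2 b = k$ exactly, so $q = Nk$ and $\Delta_{\mathrm{batch}}=0$.

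\emph{The converse, and the main obstacle.} The claim that $\Delta_{\mathrm{batch}}=0$ only when $b$ is a power of $2$ needs care, and I expect it to fail as stated. Write $\log_2 b = k-\theta$ with $\theta\in(0,1)$ when $b$ is not a power of $2$. Then
\[
q=\ceil{Nk-N\theta}=Nk-\floor{N\theta},
\qquad\text{so}\qquad
\Delta_{\mathrm{batch}}(b,N)=\floor{N\theta}.
\]
This is zero exactly when $N\theta<1$, i.e.\ when $b^N>2^{Nk-1}$. Two counterexamples show the problem:
\begin{itemize}
\item For $N=1$, $\Delta_{\mathrm{batch}}=0$ for every $b$.
\item For $b=255$ and $N=2$, $\theta \approx 0.0057$, so $2\theta<1$ and again $\Delta_{\mathrm{batch}}=0$.
\end{itemize}
I would therefore prove the corrected statement
\[
\Delta_{\mathrm{batch}}(b,N)=\floor{N(\ceil{\log_2 b}-\log_2 b)},
\]
which is positive if and only if $N(\ceil{\log_2 b}-\log_2 b)\ge 1$. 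In particular, for $b$ not a power of $2$, it is strictly positive for all $N \ge 1/\theta$. This recovers the intended claim asymptotically and agrees with the per-symbol overhead of Definition~\ref{def:overhead}.

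\emph{Caveat on comparing expected costs.} A saving in $q$ versus $Nk$ does not automatically mean a smaller expected bit cost, because the factors $2^q/b^N$ and $(2^k/b)^N$ differ. I would state the comparison at the level of the bounds: the batched bound $2\ceil{N\log_2 b}$ never exceeds the sequential bound $2N\ceil{\log_2 b}$. Any claim about the exact expected costs would be flagged as needing a separate argument.
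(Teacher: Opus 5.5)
Your proof of the true parts follows the paper's own route. The cost bounds come from Proposition~\ref{prop:cost} applied to $\mathcal{S}(b)$ and to $\mathcal{S}_N(b)$; the latter is literally $\mathcal{S}$ run on the constant sequence $(b,\dots,b)$ of length $N$. The inequality $q \le Nk$ is the paper's $\ceil{Nx} \le N\ceil{x}$, which is exactly your integrality argument.

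Where you depart from the statement, you are right and the statement is wrong. The paper's proof only establishes the ``if'' direction (``equality holds when $\log_2 b \in \mathbb{Z}$'') and never argues the converse. The converse is false: $N=1$ gives $\Delta_{\mathrm{batch}}(b,1)=0$ for every $b$, as do $b=255$, $N=2$, and even $b=26$ with $N\le 3$. Your identity $\Delta_{\mathrm{batch}}(b,N)=\floor{N(\ceil{\log_2 b}-\log_2 b)}$ is the correct replacement. It also exposes an arithmetic slip in the paper's worked example: $10\log_2 26 \approx 47.004$, so $q=48$ and $\Delta_{\mathrm{batch}}(26,10)=2$, not $3$.

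Your caveat about exact expected costs is more than a formality, because that comparison genuinely fails.
\begin{itemize}
\item For $b=26$, $N=10$, the batched expected cost is $48\cdot 2^{48}/26^{10}\approx 95.7$ bits, against $10\cdot 5\cdot 32/26\approx 61.5$ bits sequentially.
\item For $b=3$, $N=4$, it is $7\cdot 128/81\approx 11.06$ against $32/3\approx 10.67$, even though $q=7<8=Nk$.
\end{itemize}
So ``the batched approach uses fewer bits'' is defensible only as a statement about bits per round ($q$ versus $Nk$) or about the factor-$2$ upper bounds. That is the reading you propose, and the one the paper's example implicitly uses. You could strengthen your write-up by replacing ``needs a separate argument'' with one of these counterexamples.
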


\begin{proof}
The cost bounds follow from Proposition~\ref{prop:cost} applied to
$\mathcal{S}(b)$ in the first case and $\mathcal{S}_N(b)$ in the second.
The inequality $\ceil{N\log_2 b} \le N\ceil{\log_2 b}$ follows from
the sub-additivity of the ceiling function: for any real $x$,
$\ceil{Nx} \le N\ceil{x}$.
Equality holds when $x = \log_2 b \in \mathbb{Z}$, i.e., when $b$ is
a power of $2$.
\end{proof}

\begin{example}
For $b = 26$ (Latin alphabet) and $N = 10$:
$k = \ceil{\log_2 26} = 5$, sequential cost $\le 100$ bits;
$q = \ceil{10 \log_2 26} = \ceil{47.00} = 47$, batched cost $\le 94$
bits, giving a saving of $6$ bits, consistent with
$\Delta_{\mathrm{batch}}(26,10) = 50 - 47 = 3$ bits of reduction in $q$
relative to $Nk$, doubled by the factor-$2$ cost bound.
For $b = 4$ (nucleotides):
$k = 2$, $q = 2N$, so the sequential and batched costs are identical since
$4$ is a power of $2$.
\end{example}

\begin{remark}
The saving $\Delta_{\mathrm{batch}}(b,N)$ grows with $N$ up to a maximum
of $\floor{\log_2 b} - \log_2 b$ per digit, which equals the fractional
part of $\log_2 b$.
For the Latin alphabet this is approximately $0.300$ bits per symbol, and
for decimal digits approximately $0.678$ bits per symbol, consistent with
the overhead values in Table~\ref{tab:overhead}.
The batched sampler therefore eliminates the binary encoding
overhead of Section~\ref{sec:efficiency} and the rejection overhead
from the ceiling function, approaching the theoretical minimum of
$\log_2 b$ bits per symbol as $N \to \infty$.
\end{remark}

\subsection{Extension to mixed base sequences}

For a non-uniform base sequence $B = (b_1,\dots,b_L)$ repeated $N$
times, the batched sampler generalizes by taking $Q = P^N$ where
$P = \prod_{i=1}^L b_i$, drawing a single integer $U$ uniform on
$\{0,\dots,P^N-1\}$, and decomposing via $NL$ successive modular
reductions using the repeated base sequence.
Theorem~\ref{thm:batched} extends to this case with the same proof,
replacing $b^N$ by $P^N$ and $\{0,\dots,b-1\}^N$ by
$\keyspace^N = \prod_{t=1}^N \keyspace$, where the independence of all
$NL$ digits follows from the bijectivity of the extended inverse Horner
decomposition over $\{0,\dots,P^N-1\}$.

\section{Security Degradation under Non-Ideal QKD Sources}
\label{sec:epsilon}

\subsection{Model for non-ideal sources}

Assumption~\ref{ass:qkd} postulates an ideal bit source.
Real QKD implementations produce a key string $S \in \{0,1\}^n$ whose
joint distribution with the adversary's quantum state $\rho_E$ satisfies
the $\varepsilon$-security condition
\[
  \frac{1}{2}\bigl\| \rho_{SE} - \rho_U \otimes \rho_E \bigr\|_1
  \;\le\; \varepsilon,
\]
where $\rho_U$ denotes the uniform distribution over $\{0,1\}^n$ and
$\|\cdot\|_1$ is the trace norm~\cite{renner2005}.
We work in a classical reduction of this model, replacing the trace
norm condition with the total variation distance condition
\[
  \mathrm{TV}(\Pr_S, \mathrm{Uniform}(\{0,1\}^n)) \;\le\; \varepsilon,
\]
which is implied by the quantum condition and suffices for the
analysis below.

\subsection{Degradation through Algorithm~\ref{alg:sample}}

\begin{proposition}[Security degradation through key extraction]
\label{prop:epsilon-extraction}
Let the QKD source produce a bit string whose distribution is
$\varepsilon$-close to uniform in total variation.
Then the key $K$ output by Algorithm~\ref{alg:sample} satisfies
\[
  \mathrm{TV}\bigl(\Pr_K,\, \mathrm{Uniform}(\keyspace)\bigr)
  \;\le\; \frac{2^k}{P} \cdot \varepsilon,
\]
where $k = \ceil{\log_2 P}$ and $P = \prod_i b_i$.
\end{proposition}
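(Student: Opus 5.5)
The plan is to treat Algorithm~\ref{alg:sample} as a deterministic post-processing map and invoke the data-processing inequality for total variation. Let $\mathbf{S}$ be the bit string delivered by the source, and let $\mathbf{U}$ be an ideal uniform string of the same length. The run of Algorithm~\ref{alg:sample} on a given string is a fixed function $F$ of that string: it reads consecutive $k$-bit blocks, returns the inverse Horner decomposition of the first accepted block, and returns a symbol $\perp$ if no block within the available length is accepted. For any deterministic (indeed any stochastic) map one has $\mathrm{TV}(F(\mathbf{S}),F(\mathbf{U}))\le \mathrm{TV}(\mathbf{S},\mathbf{U})\le\varepsilon$. This follows by writing $\mathrm{TV}$ as a supremum over events $A$ and pulling back $A$ to $F^{-1}(A)$.

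Next I will identify $F(\mathbf{U})$. By Theorem~\ref{thm:uniformity}, which applies because $\mathbf{U}$ satisfies Assumption~\ref{ass:qkd}, $F(\mathbf{U})$ restricted to the event of acceptance is exactly uniform on $\keyspace$. In the idealised infinite-stream model, where the algorithm terminates almost surely, $F(\mathbf{U})$ is simply $\mathrm{Uniform}(\keyspace)$. Hence $\mathrm{TV}(\Pr_K,\mathrm{Uniform}(\keyspace))\le\varepsilon$. This is at most $(2^k/P)\varepsilon$ because $2^k\ge P$ by the choice $k=\ceil{\log_2 P}$, which gives the claimed inequality.

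The main obstacle is making the finite-length model honest. The source supplies only $n$ bits, so $F(\mathbf{U})$ places mass $(1-P/2^k)^{\floor{n/k}}$ on $\perp$, and $K$ must be interpreted correctly on that event. I would handle this by reading the proposition as a statement about the key conditioned on successful extraction. To keep the factor $2^k/P$ visible, I would also give a per-round argument as a cross-check. Let $q$ be the law of the accepted block $V$, and suppose its distribution is $\varepsilon$-close to $\mathrm{Uniform}\{0,\dots,2^k-1\}$, which is the marginal consequence of the global closeness. Conditioning on $V<P$ divides probabilities by $q(V<P)$. Since $q(V<P)\ge P/2^k-\varepsilon$, a direct expansion of $\sum_{v<P}\bigl|q(v)/q(V<P)-1/P\bigr|$ yields a bound of order $\varepsilon\,2^k/P$. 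The inverse Horner bijection of Proposition~\ref{prop:correctness} then transports this bound unchanged to $\keyspace$, because total variation is invariant under bijections.

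The delicate points are the constant in this conditioning computation, which I expect to close only up to a factor of $2$ unless the data-processing route is used, and the fact that subsequent rounds are not independent under a non-ideal source. For these reasons the primary proof will be the data-processing argument, with the per-round computation included only as a heuristic explanation of where the factor $2^k/P$ comes from.
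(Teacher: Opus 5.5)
Your argument is correct, but it follows a different route from the paper.

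The paper works with a single $k$-bit block $V$. It treats rejection as conditioning on $\{V<P\}$, uses the fact that conditioning on an event of ideal probability $p=P/2^k$ inflates total variation by at most a factor $1/p$, and then pushes the result through the inverse Horner bijection. The factor $2^k/P$ comes entirely from that conditioning step.

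You instead apply the data-processing inequality once to the whole run of Algorithm~\ref{alg:sample}, viewed as a map on the entire stream. You then use Theorem~\ref{thm:uniformity} to identify the ideal output as exactly $\mathrm{Uniform}(\keyspace)$. In the unbounded-stream model in which the algorithm is written, this gives the sharper bound $\varepsilon$. It also copes with the dependence between rounds under a non-ideal source, which the paper's single-block conditioning does not address.

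What the paper's route buys is a bound for a bounded bit budget, where the ideal output is uniform only conditionally on acceptance. There your worry about losing a factor of $2$ is unfounded. Write $\mu,\nu$ for the actual and ideal laws and $A$ for the acceptance event, and split each term as $\frac{\mu(x)-\nu(x)}{\nu(A)}+\mu(x)\bigl(\frac{1}{\mu(A)}-\frac{1}{\nu(A)}\bigr)$. This gives
\[
  \sum_{x\in A}\Bigl|\frac{\mu(x)}{\mu(A)}-\frac{\nu(x)}{\nu(A)}\Bigr|
  \;\le\;\frac{1}{\nu(A)}\Bigl(\sum_{x\in A}|\mu(x)-\nu(x)|+|\mu(A)-\nu(A)|\Bigr)
  \;\le\;\frac{\|\mu-\nu\|_1}{\nu(A)},
\]
because $|\mu(A)-\nu(A)|=|\mu(A^c)-\nu(A^c)|\le\sum_{x\notin A}|\mu(x)-\nu(x)|$. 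Hence $\mathrm{TV}(\mu|_A,\nu|_A)\le\mathrm{TV}(\mu,\nu)/\nu(A)$ with constant $1$. Two things matter here: the denominator must be the ideal probability $\nu(A)$, and the two error terms must be bounded jointly rather than separately.

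Applying this lemma to your finite-length output on $\keyspace\cup\{\perp\}$ with $A=\keyspace$ closes the case you flagged as the main obstacle. Conditioned on successful extraction from $n\ge k$ bits, the distance is at most $\varepsilon/\bigl(1-(1-P/2^k)^{\floor{n/k}}\bigr)\le(2^k/P)\,\varepsilon$. This reduces to exactly the stated bound when $\floor{n/k}=1$.
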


\begin{proof}
Algorithm~\ref{alg:sample} applies a deterministic function $f$ to
the bit string $V$ drawn from the source, accepting $V$ when $V < P$
and returning the inverse Horner decomposition of $V$.
The total variation distance is non-increasing under the application of
any function (data processing inequality), so
$\mathrm{TV}(\Pr_{f(V)}, \Pr_{f(U)}) \le \mathrm{TV}(\Pr_V, \Pr_U)$
where $U$ is uniform on $\{0,\dots,2^k-1\}$.
The rejection step conditions on $V < P$, which has probability
$P/2^k$ under the uniform distribution and probability in
$[P/2^k - \varepsilon, P/2^k + \varepsilon]$ under the source.
Conditioning on a set of probability $p$ inflates total variation by
at most $1/p$, giving
\[
  \mathrm{TV}(\Pr_{K}, \mathrm{Uniform}(\keyspace))
  \;\le\; \frac{\varepsilon}{P/2^k}
  \;=\; \frac{2^k \varepsilon}{P}
  \;\le\; \frac{2\varepsilon \ceil{\log_2 P}}{\log_2 P},
\]
where the last inequality uses $2^k \le 2P$ and $P \ge 2^{k-1}$.
For the simpler bound stated, $2^k/P \le 2$ suffices.
\end{proof}

\subsection{Degradation through $N$ sessions}

\begin{proposition}[Security degradation over $N$ sessions]
\label{prop:epsilon-multisession}
Under the same model, if each session $t$ uses a disjoint $k_t$-bit
block from the source and the source is $\varepsilon$-close to uniform
over the entire $\sum_t k_t$ bits, then the joint distribution of
$(K_1,\dots,K_N)$ satisfies
\[
  \mathrm{TV}\Bigl(\Pr_{K_1,\dots,K_N},\,
    \prod_{t=1}^N \mathrm{Uniform}(\keyspace_t)\Bigr)
  \;\le\; N \cdot \max_t \frac{2^{k_t}}{P_t} \cdot \varepsilon
  \;\le\; 2N\varepsilon.
\]
\end{proposition}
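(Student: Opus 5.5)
The plan is a hybrid argument over the $N$ sessions, using Proposition~\ref{prop:epsilon-extraction} for each single-session step. Write the extraction for session $t$ as a map $f_t$ applied to the bits of block $t$. Then $(K_1,\dots,K_N) = F(S)$ with $F = (f_1,\dots,f_N)$ acting blockwise on the source string $S$. Let $U$ be the uniform string on the same total length. Under $U$ the blocks are independent uniform strings, so Theorem~\ref{thm:uniformity} gives that $F(U)$ is distributed exactly as $\prod_t \mathrm{Uniform}(\keyspace_t)$.

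I would first try the direct route. By the data processing inequality, $\mathrm{TV}(\Pr_{F(S)}, \Pr_{F(U)}) \le \mathrm{TV}(\Pr_S, \Pr_U) \le \varepsilon$. This is already below the stated $N \max_t (2^{k_t}/P_t)\,\varepsilon$, since that factor is at least $1$, and then the final bound $2N\varepsilon$ follows from $2^{k_t} \le 2P_t$.

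The subtlety is the rejection loop. Session $t$ consumes a random number of $k_t$-bit rounds, not a fixed block, so $F$ must be read as a deterministic function of the whole (sufficiently long, or almost surely finite) stream. Data processing still applies to any measurable function, including one with a random stopping structure, as long as the same $F$ is applied to both $S$ and $U$.

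If the intended reading instead treats each session's output as a conditioned distribution, as in the proof of Proposition~\ref{prop:epsilon-extraction}, I would use explicit hybrids. Define $H_j$ as the distribution in which sessions $1,\dots,j$ use ideal uniform keys and sessions $j+1,\dots,N$ use the real source. Then bound $\mathrm{TV}(H_{j-1},H_j)$ by the session-$j$ extraction degradation, which is at most $(2^{k_j}/P_j)\varepsilon$ by Proposition~\ref{prop:epsilon-extraction}. Summing with the triangle inequality over $j=1,\dots,N$ gives $N\max_t (2^{k_t}/P_t)\varepsilon$, and this is at most $2N\varepsilon$.

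The main obstacle is justifying the per-hybrid bound. The marginal of block $j$ is $\varepsilon$-close to uniform because marginalizing does not increase total variation, but replacing only block $j$ while keeping its correlations with the other blocks needs care. I would handle it by coupling: take an optimal coupling of $S$ with $U$, which differ with probability at most $\varepsilon$, and note that each hybrid is a function of a mixture of these, so every step inherits the $\varepsilon$ bound amplified only by the conditioning factor $2^{k_j}/P_j$.
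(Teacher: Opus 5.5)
Your direct data-processing route is correct, and it is a genuinely different argument from the paper's. The paper bounds each session separately through Proposition~\ref{prop:epsilon-extraction} and then adds the $N$ per-session distances by a triangle inequality. That subadditivity holds only when the joint key distribution is a product, and joint $\varepsilon$-closeness of the source does not give this: two correlated blocks can have exactly uniform marginals while the pair is $\varepsilon$ from uniform. Your single application of data processing to the blockwise map $F=(f_1,\dots,f_N)$ needs no independence between blocks and gives $\mathrm{TV}\le\varepsilon$, so the factor $N\max_t 2^{k_t}/P_t$ is pure slack. The paper's summation suits a different hypothesis, namely independent blocks each $\varepsilon$-close, where growth linear in $N$ is real.

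To make your route airtight, three things are needed. First, state the hypothesis as $\varepsilon$-closeness of the entire stream read by the $N$ rejection loops, rather than of $\sum_t k_t$ bits. Second, send the non-terminating event to a dummy output; it is null under $U$, so the $\varepsilon$ already accounts for it. Third, justify that $F(U)$ is exactly $\prod_t \mathrm{Uniform}(\keyspace_t)$ despite random session boundaries. Each session ends at a stopping time of the bit filtration, so the later bits are i.i.d.\ uniform and independent of the earlier keys, and Theorem~\ref{thm:uniformity} then applies session by session.

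Drop the hybrid fallback, because it cannot be repaired. Suppose each session reads a single $k_t$-bit block and its key is taken conditionally on acceptance. Then the joint key is the image of $S$ conditioned on all $N$ blocks accepting, an event of $U$-probability $p=\prod_t P_t/2^{k_t}$. The conditioning therefore inflates $\varepsilon$ by a factor of order $1/p$, which is exponential in $N$, not by $\max_t 2^{k_t}/P_t$.

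Concretely, let $x$ be a string accepted in every block and take $S=(1-\varepsilon)U+\varepsilon\,\delta_x$, with $\delta_x$ the point mass at $x$. This $S$ satisfies $\mathrm{TV}(S,U)\le\varepsilon$, yet the conditioned joint key puts mass about $\varepsilon/p$ on a single tuple. With $N=3$, $P_t=17$, $k_t=5$ and $\varepsilon=10^{-3}$, its distance from the product of uniforms is about $6.6\times 10^{-3}>2N\varepsilon$. In that reading the stated bound fails, so no coupling can deliver a per-hybrid bound of $(2^{k_j}/P_j)\varepsilon$.
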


\begin{proof}
By the triangle inequality for total variation, the distance between
the joint distribution and the product of uniforms is bounded by the
sum of the per-session distances.
Each session contributes at most $(2^{k_t}/P_t)\varepsilon \le 2\varepsilon$
by Proposition~\ref{prop:epsilon-extraction}, and there are $N$ sessions.
\end{proof}

\subsection{Security statement for non-ideal sources}

\begin{theorem}[$\varepsilon$-secure MR-OTP]
\label{thm:epsilon-security}
Let the QKD source be $\varepsilon$-close to ideal in total variation.
Under the partition key-rolling scheme of
Definition~\ref{def:partition-rolling} with $N$ sessions and session
parameters $(L_t, P_t)$, for every message $M_t$ and ciphertext history
$(C_1,\dots,C_{t-1})$,
\[
  \bigl|\Pr[M_t = m \mid C_t = c,\, C_1=c_1,\dots,C_{t-1}=c_{t-1}]
  - \Pr[M_t = m]\bigr|
  \;\le\; 2N\varepsilon
\]
for all $m$ and all $c, c_1,\dots,c_{t-1}$ with positive probability.
\end{theorem}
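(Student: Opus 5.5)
The plan is to compare the real pipeline with the ideal one of Theorem~\ref{thm:multisession} and to carry the key-level distance of Proposition~\ref{prop:epsilon-multisession} through to the posterior of $M_t$. Throughout I treat the messages $M_1,\dots,M_N$ as independent of the source bits and of the bases. I also reduce to the case $2N\varepsilon<1$, since otherwise the bound is trivial: posterior and prior both lie in $[0,1]$.

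\emph{Step 1 (ideal reference and explicit posterior).} In the ideal world, with keys $K^*_1,\dots,K^*_N$ independent and uniform on the $\keyspace_s$, Theorem~\ref{thm:multisession} gives that the posterior equals $\Pr[M_t=m]$ exactly. In the real world, fix the bases, which are disjoint blocks and independent of the pads by Definition~\ref{def:partition-rolling}. For each fixed message, the MR-OTP encryption map is a digitwise bijection $K_s\mapsto C_s$, namely $C_s = M_s + K_s$ digitwise mod $b_{s,i}$. The real posterior can therefore be written as a ratio:
\[
\Pr[M_t=m\mid c,c_{<t}]
=\frac{\Pr[M_t=m]\sum_{m_{<t}}\Pr[m_{<t}]\,\Pr_{K}\bigl(c-m,\;c_{<t}-m_{<t}\bigr)}
{\sum_{m',m_{<t}}\Pr[M_t=m']\Pr[m_{<t}]\,\Pr_{K}\bigl(c-m',\;c_{<t}-m_{<t}\bigr)} .
\]
Here $\Pr_K$ denotes the joint law of $(K_t,K_1,\dots,K_{t-1})$, and subtraction is digitwise modulo the bases. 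If $\Pr_K$ is replaced by the product of uniforms, every term $\Pr_K(\cdot)$ is the same constant, and the ratio collapses to $\Pr[M_t=m]$. This recovers Theorem~\ref{thm:multisession}.

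\emph{Step 2 (key-level closeness).} Next I invoke Proposition~\ref{prop:epsilon-multisession}. It gives that $\Pr_{K_1,\dots,K_N}$ is within $2N\varepsilon$ of $\prod_s\mathrm{Uniform}(\keyspace_s)$ in total variation. Marginalizing to sessions $1,\dots,t$ keeps this bound, by data processing. I write $\Pr_K = U + \Delta$, where $U$ is the uniform product and $\Delta$ is a signed measure with $\sum|\Delta|\le 4N\varepsilon$. Substituting into the ratio, the numerator becomes $\Pr[M_t=m]\,(u+\delta_m)$ and the denominator becomes $u+\sum_{m'}\Pr[M_t=m']\delta_{m'}$. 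Here $u$ is the common uniform mass and $\delta_{m}$ is the $\Pr[m_{<t}]$-average of $\Delta$ over the shifted points. The difference between the posterior and the prior is then $\Pr[M_t=m]\,(\delta_m-\bar\delta)/(u+\bar\delta)$, with $\bar\delta=\sum_{m'}\Pr[M_t=m']\delta_{m'}$.

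\emph{Step 3 (main obstacle: from total variation to a pointwise bound).} The hard part is controlling $(\delta_m-\bar\delta)/(u+\bar\delta)$ at a single ciphertext history. A total-variation bound controls $\Delta$ only in aggregate, while the ratio divides by the possibly small mass $u+\bar\delta$. My first attempt will be to exploit the shift structure. As $m'$ ranges over the message space, the points $c-m'$ range over distinct keys in $\keyspace_t$. So $\bar\delta$ and $\delta_m$ are weighted averages of $\Delta$ over disjoint key points, and I will try to bound $|\delta_m-\bar\delta|$ by the relative deviation $\sup|\Delta|/u$. I would then argue that the per-session extraction in Proposition~\ref{prop:epsilon-extraction} yields this relative closeness, with the factor $2^{k_t}/P_t\le 2$ per session. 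The $N$ sessions would combine additively, as in Proposition~\ref{prop:epsilon-multisession}, to give $2N\varepsilon$. Together with $\Pr[M_t=m]\le 1$, this bounds the difference by $2N\varepsilon$.

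The step I expect to fail without extra input is precisely this passage from total-variation closeness to relative, pointwise closeness. If it cannot be obtained from the statement's hypothesis alone, the fallback is to read the $\varepsilon$-closeness of the source in the per-outcome sense. That reading is what makes the collapse argument of Step 1 stable under perturbation.
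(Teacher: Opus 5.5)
Your Step~3 is a genuine gap, and you have located it exactly. It cannot be closed, however: under a total-variation hypothesis the pointwise bound in the statement is false.

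Take $N=1$ and the fixed base sequence $(2,\dots,2)$ of length $n\ge 3$. Then $P=2^n$, $k=n$, no rejection ever occurs, and $K$ is the source block itself. Let the source be uniform except that one string $\kappa_0$ gets mass $0$ and another string $\kappa_2$ gets mass $2^{1-n}$. Its total variation distance from uniform is $\varepsilon=2^{-n}$. Let $M$ be uniform on two distinct messages $m_a,m_b$, and put $c=m_a+\kappa_0$ digitwise mod~$2$. Because $c-m_b\neq\kappa_0$, we have $\Pr[C=c]\ge 2^{-n-1}>0$. Yet $\Pr[M=m_a\mid C=c]=0$, a deviation of $\tfrac12>2N\varepsilon=2^{1-n}$.

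In your notation, $u=2^{-n}$ and $|\Delta|=u$ at a single point. So $(\delta_m-\bar\delta)/(u+\bar\delta)$ is of order one while $\sum|\Delta|$ is of order $u$. An aggregate bound cannot control a ratio whose denominator is a single uniform mass. The repair you sketch fails at its first move for the same reason: Proposition~\ref{prop:epsilon-extraction} is only a TV statement and gives no control of $\sup|\Delta|/u$.

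There is also a smaller slip in Step~1. Disjoint blocks are independent only under Assumption~\ref{ass:qkd}. For a merely TV-close source, the base blocks can be correlated with the pad blocks, so conditioning on the bases can itself push the pad law far from uniform.

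The paper's own proof does not get past this point either. After deriving $\mathrm{TV}(\Pr_{C_t\mid M_t},\mathrm{Uniform})\le 2N\varepsilon$ by data processing, it asserts that the posterior bound ``follows from the definition of total variation distance.'' That is exactly the inference the example refutes.

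What a TV hypothesis does support is an averaged statement. Let $\tau$ be the TV distance of the joint key law from the product of uniforms, so $\tau\le 2N\varepsilon$ by Proposition~\ref{prop:epsilon-multisession}. Set $C=(C_1,\dots,C_t)$. Compare the real law of $(M_t,C)$ first with $\Pr_{M_t}\otimes\mathrm{Uniform}$ and then with $\Pr_{M_t}\otimes\Pr_C$. This gives $\sum_{c}\Pr[C=c]\,\mathrm{TV}\bigl(\Pr_{M_t\mid C=c},\Pr_{M_t}\bigr)\le 2\tau$.

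Your fallback, a per-outcome hypothesis, is the right kind of strengthening for a pointwise claim, and with it your Steps~1--2 close immediately. Suppose every joint key value has probability in $[(1-\eta)u,(1+\eta)u]$. Then the ratio in Step~1 lies between $\frac{1-\eta}{1+\eta}\Pr[M_t=m]$ and $\frac{1+\eta}{1-\eta}\Pr[M_t=m]$. The deviation is therefore at most $\frac{2\eta}{1-\eta}\Pr[M_t=m]$. That, however, is a different theorem with a stronger hypothesis, not a proof of the statement as written.
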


\begin{proof}
By Proposition~\ref{prop:epsilon-multisession}, the joint key distribution is $(2N\varepsilon)$-close in total variation to the
product of uniforms.
The MR-OTP encryption map $(M_t, K_t) \mapsto C_t$ is a function of
$K_t$ alone (for fixed $M_t$), and the data processing inequality
gives $\mathrm{TV}(\Pr_{C_t \mid M_t}, \mathrm{Uniform}) \le 2N\varepsilon$.
The posterior bound follows from the definition of total variation distance applied to the conditional distribution.
\end{proof}

\begin{remark}[Parameter setting]
Theorem~\ref{thm:epsilon-security} gives a concrete design guideline:
to achieve an end-to-end security parameter of $\delta$ over $N$
sessions, the QKD system must deliver bits with
$\varepsilon \le \delta / (2N)$.
For $N = 10^3$ sessions and target $\delta = 2^{-64}$, the required
source quality is $\varepsilon \le 2^{-75}$, which is within the
certified output of current QKD hardware~\cite{renner2005}.
The bound $2N\varepsilon$ is linear in $N$, reflecting the fact that
the composition of $N$ sessions with independent $\varepsilon$ errors
accumulates total variation additively across sessions, and the bound
is tight in the sense that $N$ sessions with independent $\varepsilon$
errors cannot achieve better than $N\varepsilon$ total variation in
general.
\end{remark}

\section{Conclusion}
\label{sec:conclusion}

This paper has established the algorithmic foundations needed to deploy
the Mixed-Radix One-Time Pad together with a quantum key distribution
source in a way that preserves end-to-end information-theoretic security
across an arbitrary number of sessions and under non-ideal bit sources.

The central technical contribution is the identification of Horner's  method and its inverse as the natural algebraic correspondence between the binary integer representation produced by a QKD source and the mixed-radix key space defined in~\cite{buono2026a}. The forward Horner evaluation composes a digit tuple into a single
integer using the positional weights $W_i$ of~\cite{buono2026a}, while the inverse decomposition recovers the digits by successive modular
reduction, forming a bijection between $\keyspace$ and the integer interval $\{0,\dots,P-1\}$.

The key algorithmic observation, formalized in Proposition~\ref{prop:bias} and Theorem~\ref{thm:uniformity}, is that reading $k$ bits from a binary source and reducing modulo $P$ produces a biased distribution whenever $P$ is not a power of $2$, and that this bias constitutes a violation of the uniformity hypothesis of perfect secrecy at any magnitude. Adding a rejection step corrects the distribution to uniform, with expected bit cost bounded by a factor of $2$.

The single-session theorem (Theorem~\ref{thm:e2e}) and the multi-session theorem (Theorem~\ref{thm:multisession}) show that the composition of
an ideal QKD source, Algorithm~\ref{alg:sample}, and the partition key-rolling protocol achieves Shannon perfect secrecy across all
sessions at once, with keys of different sessions being mutually independent. The proof requires only the disjointness of the bit blocks assigned to 
different sessions and carries no computational assumption.

The batched sampler (Section~\ref{sec:batched}, Theorem~\ref{thm:batched}) extends single-tuple extraction to $N$ digits at once from one large integer, with Proposition~\ref{prop:batched-cost} showing a saving of $\Delta_{\mathrm{batch}}(b,N) = N\ceil{\log_2 b} - \ceil{N\log_2 b}$ bits over sequential sampling, approaching the theoretical minimum of $\log_2 b$ bits per symbol as $N$ grows.

The $\varepsilon$-security analysis (Section~\ref{sec:epsilon}, Theorem~\ref{thm:epsilon-security}) replaces the ideal source assumption with a quantitative bound, showing that an $\varepsilon$-close source over $N$ sessions produces posterior error at most $2N\varepsilon$, giving the concrete design requirement $\varepsilon \le \delta/(2N)$ for a target end-to-end security parameter $\delta$.

The computational hardness analysis (Section~\ref{sec:hardness}) develops the structure of the Base Recovery Problem across two settings and establishes three layers of results at different levels of conditionality.

The unconditional layer covers two distinct facts, the first being that Corollary~\ref{cor:co-insoluble} establishes that base recovery in the ciphertext-only setting is information-theoretically insoluble, a stronger protection than any computational hardness claim, and one that holds without assumption. Theorem~\ref{thm:query-lower-bound} shows that in the known-plaintext setting, any testing-based algorithm requires at least $|\mathcal{F}_T|-1$ queries to identify $B$, a quantity exponential in $L$ for small $T$. Theorem~\ref{thm:invariance} shows that an adversary who recovers $B$ by any means gains zero information about any message encrypted under a
fresh key, because the posterior $\Pr[M=m \mid C=c, B=b]$ equals the prior $\Pr[M=m]$ for every $m$. Corollary~\ref{cor:temporal} combines these to establish that the work
of recovering $B$ is a pure delay with zero informational return, exponential in $L$ and independent of any hardness assumption.
Corollary~\ref{cor:restricted-support} extends the invariance result to message distributions with restricted support, showing that perfect secrecy holds for any distribution over any subset $S \subseteq \mathcal{M}$, and that the key entropy is minimized when the base sequence matches the support structure of $S$.

The conditional layer rests on the hardness of the BRP, and Proposition~\ref{prop:twolayer} shows that when the BRP is hard in average case, no probabilistic polynomial-time algorithm can identify $B$ with probability above the MAP posterior bound, adding an algebraic adversary guarantee to the combinatorial guarantee of the query lower bound.

Remark~\ref{rem:two-secrets} separates the computational secret $B$ from the information-theoretic secret $K$, establishing that the two are independent and that each layer retains its guarantee regardless of the state of the other. Remark~\ref{rem:taxonomy} presents the complete adversary taxonomy across four adversary classes, showing that Theorem~\ref{thm:invariance} is the only result covering all classes unconditionally, and that the query lower bound and computational hardness cover disjoint subsets of the adversary space for the problem of identifying $B$.

Section~\ref{rem:sip} identifies the logical structure of the invariance result as an instance of the Syntactic Invariance Principle
of~\cite{buono2026b}, where the message $M$ is encoded at the numerical level of description and any algorithm operating at the symbolic level
of $(C, B)$ has no access to that level by the same mechanism that makes frozen subterms unreachable in the superposition calculus of that paper.

Two directions remain open, namely the dynamic key-rolling variant (Open Problem~\ref{op:dynamic}) and the average-case hardness of the
Base Recovery Problem in the known-plaintext setting (Open Problem~\ref{op:brp-hardness}).

\section{Acknowledgments}
\label{sec:Acknowledgments}
The author used an artificial intelligence based language assistant to support text revision, translation, and bibliography formatting. All scientific ideas and conclusions are the author's own.


\section*{References}
\bibliographystyle{ieeetr}
\bibliography{biblio}

\end{document}